\DeclareMathOperator{\supp}{supp}
\newcommand{\nix}[1]{}
\newtheorem{theorem}{Theorem}
\newtheorem{corollary}[theorem]{Corollary}
\newtheorem{lemma}[theorem]{Lemma}
\newtheorem{remark}{Remark}
\begin{document}

\title{Projecting 3D color codes onto  3D toric codes}
\author{Arun B. Aloshious}
\email{aloshious.sp@gmail.com}
\author{Pradeep Kiran Sarvepalli}
\email{pradeep@ee.iitm.ac.in}
\affiliation{Department of Electrical Engineering, Indian Institute of Technology Madras, Chennai 600 036, India}
\date{\today}
\begin{abstract}
Toric codes and color codes are two important classes of topological codes.  Kubica, Yoshida, and Pastawski    showed that any $D$-dimensional color code can be mapped to a finite number of toric codes in $D$-dimensions. In this paper  we propose an alternate map of 3D color codes to 3D toric codes with a view to decoding 3D color codes. Our approach builds on Delfosse's result for 2D color codes and exploits the topological properties of these codes. Our result reduces the decoding of 3D color codes to that of 3D toric codes.  Bit flip errors are decoded by projecting on one set of 3D toric codes while phase flip errors are decoded by projecting onto another set of 3D toric codes. 
\end{abstract}
\pacs{03.67.Pp}

\maketitle
\section{Introduction}

Three dimensional (3D) toric codes \cite{castelnovo08,hamma05} and color codes \cite{bombin07} are topological quantum codes defined on 3D lattices. 
Like their 2D  counterparts, they are also Calderbank-Shor-Steane (CSS) codes \cite{calderbank96} where the bit flip and phase  errors can be corrected independently. 
Three dimensional topological codes are inherently asymmetrical  in their error correcting capabilities for the 
bit flip errors and the phase flip errors, which is perhaps one of the reasons why they have not received as much attention as their 2D analogues. 
However, the growing interest in asymmetric error models \cite{ioffe08,brooks13,sarvepalli09} motivates us to study them in closer detail. 

Another reason for studying 3D codes, specifically the 3D color codes, comes from the fact that in some ways they are also richer than 2D codes. Certain 3D color codes also possess a transversal non-Clifford gate \cite{bombin07}.  This is not possible for 2D  codes  or 3D toric codes. 
While the 3D toric code on a cubic lattice has been studied in \cite{castelnovo08,hamma05,bravyi11},  the general case of an arbitrary lattice has not been investigated as much. 

For these codes to be useful  for fault tolerant quantum computing it is necessary to develop efficient 
decoding algorithms. However, there appears to be no previous work on the decoding of 3D color codes. 
While we do not solve this problem in this paper, we make some progress in the decoding of 3D color codes by reducing it to the decoding of the 3D toric codes. Errors corresponding to chains in the lattice can be easily decoded on a 3D toric code. But efficient decoders are not known for errors corresponding to surfaces except in the case of cubic lattice. In this case a decoder similar to the decoder for the 4D toric code  in \cite{preskill02} can be used.

The central result of our paper is a mapping from 3D color codes to 3D toric codes.  By exploiting the topological properties of color codes, we establish a mapping between 3D color codes and 3D toric codes. 
The  work most similar to ours is that of Kubica, Yoshida, and Pastawski \cite{kubica15} who showed, among other things, that the 3D color code can be mapped  to three copies of 3D toric codes.  Our results give a different mapping from the color code to the toric codes. 
Our map also preserves the CSS nature of the color code. 
We project the $X$ errors and $Z$ errors onto different sets of toric codes unlike \cite{kubica15} which employs just one set of toric codes. 
Their map also implies that a 3D color code can be decoded via 3D toric codes. 
 The question of which map is better for decoding is not yet known. This will be investigated in a later work.

Another work related to ours is that of Delfosse  who showed that 2D color code can be projected onto surface codes \cite{delfosse14} using the machinery of chain complexes derived from hypergraphs. Our results generalize his approach to 3D. We take a somewhat simpler approach and do not explicitly make use of chain complexes based on hypergraphs. 
In passing we mention that similar mappings were known in 2D  \cite{yoshida11,bhagoji15,bombin11}. Their approaches  can also be generalized to  3D.

The paper is structured as follows. In Section~\ref{sec:bg} we give a brief review of  3D toric codes and color codes. 
In the subsequent section we present the central result of the paper showing how to project a 3D color code onto  a collection of 3D toric codes and propose a novel decoding scheme for color codes. We then conclude with a  brief discussion and outlook for further research. 
We assume that the reader is familiar with stabilizer codes \cite{calderbank98,gottesman97}. 

\section{Preliminaries}\label{sec:bg}
\subsection{3D toric codes}
We briefly review 3D topological codes.
A 3D toric code is defined over a cell complex (denoted $\Gamma$) in 3D. We assume that  qubits are placed on the edges of the complex. For each vertex $v$ and face $f$, 
we define  stabilizer generators as follows:
\begin{eqnarray}
A_v = \prod_{e \in \iota(v)} X_e \mbox{ and } B_f = \prod_{e \in \partial(f)} Z_e\label{eq:stabilizers-3d},
\end{eqnarray}
where $\iota(v)$ is the set of edges incident on $v$ and $\partial(f)$ is the set of edges that constitute the boundary of $f$. 
When there are periodic boundary conditions, the stabilizer generators $A_v$ and $B_f$  are constrained as 
follows:
\begin{eqnarray}
\prod_v A_v = I \text{ and }  \prod_{f \in \partial(\nu)} B_f= I,
\end{eqnarray}
where $\nu$ is any 3-cell and $\partial(\nu)$ is the collection of faces that form the boundary of $\nu$.
If $\Gamma$ has boundaries, then the constraints have to be modified accordingly. 
Additional constraints could be present depending on the cell complex. 

Sometimes it is useful to define the (3D) toric codes using the dual complex. 
We denote the dual of $\Gamma$ by $\Gamma^\ast$.
Qubits are placed on the faces of the dual complex. The stabilizer generators for a 3-cell $\nu$ and an edge $e$ in the dual complex are defined as
\begin{eqnarray}
A_{\nu} = \prod_{f\in \partial (\nu)} X_f
\mbox{ and } B_{e} = \prod_{f: e  \in \partial(f)} Z_{f}\label{eq:stabilizers-3d-dual},
\end{eqnarray} 
where $\partial(\nu)$ is the boundary of $\nu$ and $\partial (f) $ the boundary of $f$.

Phase flip errors on the toric code are detected by the operators $A_v$. 
Phase flip errors can be visualized as paths or strings on the lattice.
The nonzero syndromes always occur in pairs. 
The bit flip errors on the other hand are detected by the operators $B_e$. They are better visualized in the dual complex.  Since qubits are associated to faces in the dual complex, an $X$ error can be viewed as a surface obtained by union of faces (with errors)   and the (nonzero) syndrome as the boundary of the surface. Also note that since the boundary of each face is a cycle of trivial homology, the syndrome of $X$ errors is a collection of  cycles of trivial homology in $\Gamma^\ast$. 

\subsection{3D color codes}
Consider a complex with 4-valent vertices and 3-cells that are 4-colorable. Such colored complexes are called 3-colexes, \cite{bombin07}. 
A 3D color code is a topological stabilizer code constructed from a 3-colex. The stabilizer generators of the color code are given as
\begin{eqnarray}
B_\nu^X = \prod_{v\in \nu} X_v \mbox{ and }  B_f^Z = \prod_{v\in f} Z_v \label{eq:3d-tcc-stabilizers}
\end{eqnarray}
where  $\nu$  is a 3-cell and  $f$ a face. 
It turns out that for each 3-cell $\nu $ we can define a (dependent) $Z$ stabilizer as
$B^Z_\nu=\prod_{v\in \nu} Z_v$.
A 3-colex complex defines a stabilizer code with the parameters $[[v,3h_1]]$
where $h_1$ is the first Betti number of the complex, \cite{bombin07,bombin07a}. 

We can also define the color code in terms of the dual complex.  Now  qubits correspond to 3-cells, $X$-stabilizer generators to vertices and $Z$-stabilizer generators  to edges of $\Gamma^\ast$.
\begin{eqnarray}
B_v^X = \prod_{\nu: v\in  \nu} X_\nu \mbox{ and }  B_e^Z = \prod_{\nu: e\in  \nu} Z_\nu \label{eq:3d-tcc-stabilizers-dual}
\end{eqnarray}

We quickly review some relevant colorability properties of 3-colexes.
The edges of such a 3-colex can also be 4-colored: the outgoing edges of every 3-cell can be colored with the same color as the 3-cell.
We can  color the faces based on the colors of the 3-cells. A face is adjacent to exactly two 3-cells. A face adjacent to $3$-cells  colored $c$ and $c'$ is colored $cc'$. This means that the 3-colex is $6$-face-colorable. In view of the colorability of the 3-colex we refer to a $c$-colored  $3$-cell  as $c$-cell without explicitly mentioning that it is a 3-cell. Likewise we can unambiguously refer to the $cc'$-colored faces as
$cc'$-cells or $cc'$-faces,  $c$-colored edges as $c$-edges and $c$-colored vertices as $c$-vertices. 
We denote the $i$-dimensional cells of a complex $\Gamma$ as $\mathsf{C}_i(\Gamma)$ and  the $i$-dimensional cells of color $c$ as $\mathsf{C}_{i}^{c}(\Gamma)$. 

\section{Projecting a 3D color code onto 3D toric codes}\label{sec:embedding}

In this section we state and prove the central result of the paper, namely,   3D color codes can be projected onto a finite collection of  3D toric codes.  A more precise statement will be given later.  First, we give an intuitive explanation and then proceed to prove it rigorously.

\subsection{Intuitive  explanation through the decoding problem }

The main intuition behind the projection of color codes onto  toric codes is that any such mapping should preserve the error correcting 
capabilities  of the color code and enable decoding. From the point of view of a decoder, the information available to it is simply the syndrome information.
In a topological code this syndrome information can be represented by  
the cell complex. Our main goal is to preserve the syndrome information on the 3-colex while translating it into a different cell complex. 

In a 3-colex, qubits reside on the vertices while the checks correspond to faces and volumes. If we look at the dual complex,
the qubits correspond to 3-cells which are tetrahedrons; the $X$-type checks to vertices and the $Z$-type checks to 
edges. Due to this correspondence we often refer to the boundary of a qubit or a collection of qubits wherein  we mean the boundary of the 3-cells which correspond to those qubits in the dual complex.

We  address the bit flip and phase flip errors separately. Suppose that an $X$ error occurs.  
Since the qubits correspond to volumes, error correction is equivalent to (i) identifying
the boundary which {\em encloses} the qubits in error and (ii) specifying whether the erroneous qubits lie inside or outside the boundary. 
The second step is necessary because the qubits with errors have the same boundary as the qubits without errors.
 In case of bit flip errors, the syndrome information  is present on the edges; this is clearly not the boundary of a volume. The question then arises how do we recover the boundary of the erroneous qubits when we appear to be in possession of some partial information about the boundary. 

To see how we might solve this problem,  let us assume that there is just one bit flip error, see 
Fig.~\ref{fig:single-tetra-x} for illustration. This causes all the six edges of the tetrahedron to carry nonzero syndromes. While these edges are contained in the boundary of the tetrahedron it is  not the surface we are looking for. One way to recover the boundary of the tetrahedron is as follows. Imagine we deleted one vertex of the tetrahedron, then we would also be deleting three of the four faces of the tetrahedron and we would end with just one face. All the edges of this remaining face  carry nonzero syndromes. These edges are precisely the boundary of that face. Similarly deleting other vertices of the tetrahedron (separately) we would be able to recover all the faces of the tetrahedron. Since the union of these faces constitutes the boundary of the erroneous qubit we are able to recover the boundary of the error.
However, error correction is not complete. 
Both the single qubit in error and the collection of the qubits without error have the same boundary.
  To complete error correction we also need to choose which of these sets  of qubits are in error. We can decide on the volume  which contains fewer number of qubits. 
In the present case we would choose the qubit in error completing the error correction. 
Let us identify the key ideas in the previous procedure: 
\begin{compactenum}[(i)]
\item  We construct a  collection of complexes obtained by deleting $c$-vertices of tetrahedrons. 
\item  Then in each complex, from the edges carrying nonzero syndrome we recover  part of the boundary of the erroneous qubits. 
\item  Then we combine the boundary pieces found in (ii) to recover the boundary of the erroneous qubits. 
\item  Finally we decide whether the interior or the exterior set of qubits enclosed by the boundary are in error. 
\end{compactenum}
Step (ii) is key to making the connection with the 3D toric codes. This step is identical to the correction of the $X$-type errors in 3D toric codes.

\begin{figure}
\includegraphics[scale=0.4]{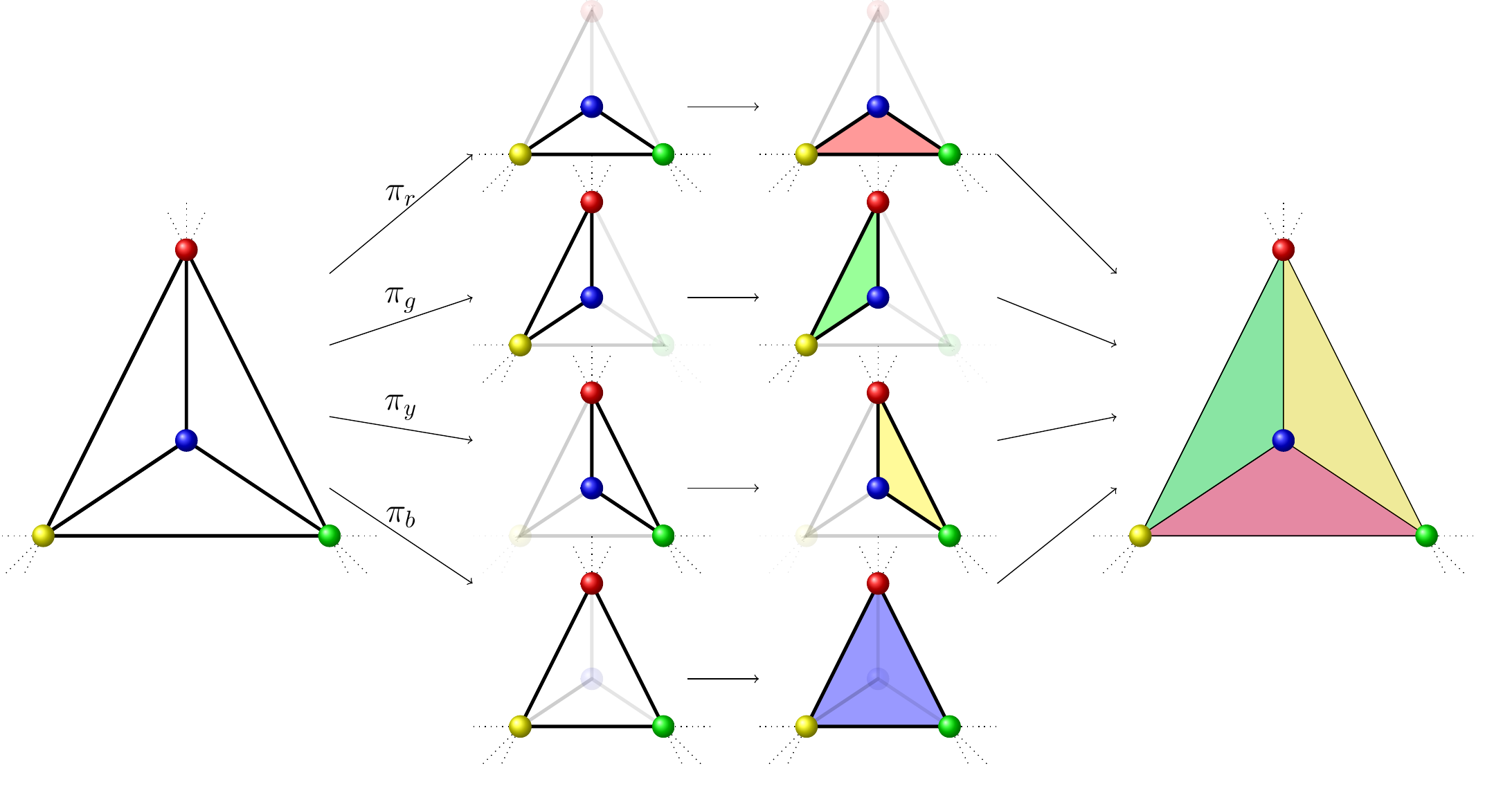}
\caption{We can recover the boundary of a tetrahedron from the edges by reconstructing the faces and then combining the faces.}
\label{fig:single-tetra-x}
\end{figure}

A similar idea will lead us to the procedure for decoding the phase flip errors, see Fig.~\ref{fig:single-tetra-z}. In this case the syndromes that detect the phase flip errors correspond to the vertices in the dual complex. Suppose now that there is a single phase flip error. The vertices of the erroneous tetrahedron will carry the syndrome information about the error. Now we seem to have even lesser information about the boundary of the erroneous tetrahedron than before. However, we can recover the boundary by the following procedure. Delete any pair of vertices of the tetrahedron. We will be left with one edge and two vertices. We can first identify the edge as piece of the boundary we are looking for. 
Deleting  all the six possible pairs of vertices, we are able to recover the six edges which are in the boundary of the tetrahedron. Now the problem is identical to the one we solved for correcting bit flip errors. 
We summarize the key steps:
\begin{compactenum}[(i)]
\item We construct new complexes from the original complex by deleting pairs of vertices of each tetrahedron. 
\item Then we recover the edges which are in the boundary of the erroneous qubits. 
\item At this point the problem is same as the problem of decoding bit flip errors which can be solved using the previous procedure. 
\end{compactenum}
In correcting the $Z$-type errors the connection to the toric codes happens in (ii). This is precisely the process used to decode 
$Z$-type errors in 3D toric codes.

The procedures we outlined are heuristic and somewhat imprecise; 
they need a  rigorous justification as to correctness and efficiency. 
We also need to consider the cases where the boundary recovery procedure can fail. 
We now turn to address these issues in the next section.

\begin{figure}
\includegraphics[scale=0.4]{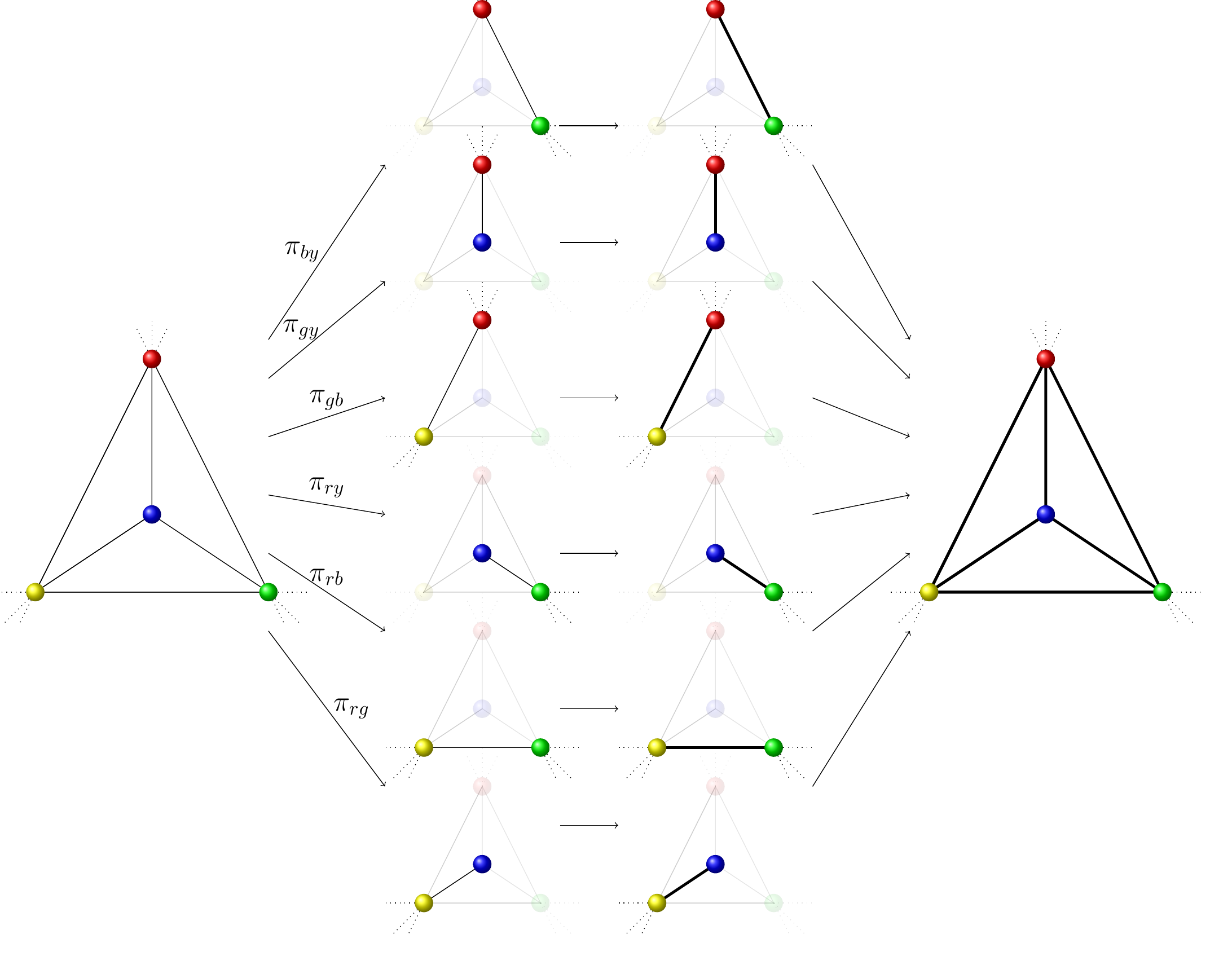}
\caption{We can recover the boundary of a tetrahedron from the vertices by first recovering the edges in the boundary of the tetrahedron. With the edges recovered we can proceed as illustrated in Fig.\ref{fig:single-tetra-x} to recover the boundary of the tetrahedron.}
\label{fig:single-tetra-z}
\end{figure}

\subsection{3-colexes, duals and minors}

As we saw in the previous section, our approach to decoding color codes leads us to duals and minors of complexes. So we begin by studying the properties of the 3-colexes and their minors. 
First we state some properties of the dual of a 3-colex.  Since they are immediate from the properties of the 3-colex  we omit the proof. 
For the rest of the paper we assume that $\Gamma$ is a 3-colex.

\begin{lemma}\label{lm:face-colorability-dual-colex}
Let $\Gamma$ be a 3-colex. Then the dual complex $\Gamma^\ast$ is 4-vertex-colorable, 6-edge-colorable,
4-face-colorable. 
\end{lemma}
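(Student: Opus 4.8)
The plan is to transport the colorings of $\Gamma$ that were established in the previous subsection directly across the duality map. Recall that dualizing a three-dimensional cell complex reverses dimension: a $k$-cell of $\Gamma$ corresponds to a $(3-k)$-cell of $\Gamma^\ast$. Thus the $3$-cells of $\Gamma$ are in bijection with the vertices of $\Gamma^\ast$, the faces of $\Gamma$ with the edges of $\Gamma^\ast$, the edges of $\Gamma$ with the faces of $\Gamma^\ast$, and the vertices of $\Gamma$ with the $3$-cells of $\Gamma^\ast$. I would first record this bijection and, crucially, observe that incidence is preserved: two cells of $\Gamma$ are incident exactly when their dual cells are incident in $\Gamma^\ast$.

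With the bijection in hand, each of the three claims is obtained by pushing forward a coloring already available on $\Gamma$. For the vertex coloring of $\Gamma^\ast$ I would assign to each vertex the color of the corresponding $3$-cell of $\Gamma$; since the $3$-cells of a $3$-colex are $4$-colorable, this uses four colors and gives $4$-vertex-colorability. For the edge coloring of $\Gamma^\ast$ I would assign to each edge the color $cc'$ of the corresponding face of $\Gamma$; since the faces of a $3$-colex are $6$-colorable, this yields a $6$-edge-coloring. For the face coloring of $\Gamma^\ast$ I would use the $4$-coloring of the edges of $\Gamma$ (outgoing edges of a $3$-cell sharing its color), giving $4$-face-colorability.

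The only point needing care is that each induced coloring is \emph{proper} in the relevant sense on $\Gamma^\ast$, i.e.\ that cells which are required to receive distinct colors indeed do so. This is where I would invoke incidence-preservation: any properness condition phrased as an adjacency constraint among cells of $\Gamma^\ast$ translates back into the same constraint among the corresponding cells of $\Gamma$, where it already holds by the colorability properties of the $3$-colex. For example, two vertices of $\Gamma^\ast$ joined by an edge correspond to two $3$-cells of $\Gamma$ sharing a face, and these carry distinct colors by the $4$-colorability of the $3$-cells; likewise the $cc'$ labelling of faces is consistent precisely because each face of $\Gamma$ abuts exactly two distinctly colored $3$-cells. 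Since no genuinely new adjacency is created by passing to the dual, I expect this verification to be routine rather than a real obstacle, which is consistent with the authors omitting the proof.
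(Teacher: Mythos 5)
Your proposal is correct and is essentially the paper's own (omitted) argument: the authors declare the lemma ``immediate from the properties of the 3-colex'' precisely because the $4$-coloring of $3$-cells, the $6$-coloring of faces, and the $4$-coloring of edges of $\Gamma$ transport across the dimension-reversing duality to give the vertex, edge, and face colorings of $\Gamma^\ast$, which is exactly your construction. The incidence-preservation check you flag is the routine verification the authors chose not to spell out.
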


Every qubit in the 3-colex  is identified with a tetravalent vertex incident on four 3-cells which are 4-colorable. Therefore in the dual complex, every qubit corresponds to a tetrahedron whose vertices are of different colors. Similarly the four faces of each tetrahedron are also of different colors. (Follows from  Lemma~\ref{lm:face-colorability-dual-colex}.)
Let us denote the minor complex of  $\Gamma^\ast$, i.e. the complex obtained by deleting all the vertices of color $c$, by $\Gamma^{\ast \setminus c}$. 
We denote this operation as $\pi_c$ so that 
\begin{eqnarray}
\pi_c(\Gamma^\ast) =  \Gamma^{\ast\setminus c} \label{eq:minor-map-1}
\end{eqnarray}
The resulting structure $\Gamma^{\ast\setminus c}$ is a well defined complex. Clearly, its vertices and edges are a subset of the parent complex $\Gamma^\ast$. The structure of the faces and 3-cells is not so obvious. 
A face in $\Gamma^\ast$ that is incident on a $c$-vertex will not survive in $\Gamma^{\ast\setminus c}$. Therefore only the $c$-faces of  $\Gamma^\ast$ which are not incident on a $c$-vertex will be faces of 
$\Gamma^{\ast\setminus c}$. 
The 3-cells of $\Gamma^{\ast\setminus c}$ are formed by merging all the tetrahedrons that are incident on a 
$c$-vertex. 

We can now extend the action of $\pi_c$ to the individual cells of $\Gamma^\ast$, there is some freedom on how to extend as long as we retain the information needed for error correction. 
We are primarily  interested in extending  $\pi_c$ so that it captures the information about (i) the qubits, (ii) the errors on them and (iii) the associated syndrome. 

A vertex that is not colored $c$ will be mapped to a vertex in $\Gamma^{\ast\setminus c}$.
An edge that is not incident on a  $c$-vertex will be mapped to an edge in $\Gamma^{\ast\setminus c}$. Edges that are incident on a $c$-vertex can be thought as being mapped to the empty set. 
A $c$-face in $\Gamma^{\ast}$  will continue to be a face in  $\Gamma^{\ast\setminus c}$
as it is not incident on any $c$-vertices. 
A 3-cell has exactly 4 faces and on the deletion of a $c$-vertex just the $c$-face in its boundary will be left in $\Gamma^{\ast\setminus c}$. A $3$-cell  in $\Gamma^\ast$ corresponds to a qubit, so we can interpret the 
$c$-face in its boundary as the qubit in $\Gamma^{\ast\setminus c}$. Since every face is shared between two 3-cells, there exist two 
distinct  cells $\nu_1$ and $\nu_2$ such that $\pi_c(\nu_1) = \pi_c(\nu_2)$.
The following equations summarize the preceding discussion. 
\begin{subequations}
\begin{eqnarray}
\pi_c(v)&=&v \mbox{ if  } v\in \mathsf{C}_0(\Gamma^\ast) \setminus \mathsf{C}_0^c(\Gamma^\ast) \\
\pi_c(e)&=&e \mbox{ if  } e  \in \mathsf{C}_1(\Gamma^\ast)\setminus \mathsf{C}_1^{cc'}(\Gamma^\ast)\label{eq:pi-edge} \\
\pi_c(f) & = &f \mbox { if } f\in \mathsf{C}_2^c(\Gamma^\ast)\label{eq:pi-face}\\
\pi_c(\nu)&=&f_{\ni\nu}^c=\partial \nu \cap \mathsf{C}_2^{c}(\Gamma^\ast); \nu \in \mathsf{C}_3(\Gamma^\ast)\label{eq:cell-bndry}
\end{eqnarray}
\end{subequations}
where $f_{\ni\nu}^c$ is the unique $c$-face in $\nu$. From these relations we can write the boundary of a 3-cell as 
\begin{eqnarray}
\partial(\nu) = \sum_c \pi_c(\nu) = \sum_c f_{\ni\nu}^c, \label{eq:qubit-bndry}
\end{eqnarray}
where the summation is carried  addition modulo 2. 
The boundary of a collection of 3-cells can be extended linearly. 

Some relevant properties of the minor complexes are considered next. They concern both the structure and coloring properties of the  minor complexes. 
\begin{lemma}\label{lm:monochromatic-faces-minor-complex}
Let $\Gamma$ be a 3-colex and $ c\in \{r,b,g,y \}$.
Then the minor complex $\Gamma^{\ast \setminus c} $
has only  $c$-colored faces and $dd'$-edges  where $d,d'\in \{r,b,g,y \}\setminus \{c\}$. 
\end{lemma}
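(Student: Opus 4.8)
The plan is to reduce the statement to two ``color-compatibility'' observations relating the color of an edge or face of $\Gamma^\ast$ to the colors of the vertices it touches, and then to read off which cells survive $\pi_c$ using the single criterion that a cell is deleted precisely when it is incident on a $c$-vertex.

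First I would record how the colorings of $\Gamma$ dualize to incidences in $\Gamma^\ast$. A $cc'$-face of $\Gamma$ is adjacent to exactly the two $3$-cells colored $c$ and $c'$, so the dual $cc'$-edge of $\Gamma^\ast$ joins a $c$-vertex to a $c'$-vertex. Likewise, a $c$-edge of $\Gamma$ (the edge outgoing from a $c$-cell) is shared by exactly the three $3$-cells whose colors lie in $\{r,b,g,y\}\setminus\{c\}$; dualizing, the corresponding $c$-face of $\Gamma^\ast$ is a triangle whose three vertices carry exactly the colors $\{r,b,g,y\}\setminus\{c\}$. This last statement also matches the earlier observation that the four faces of a dual tetrahedron receive the four distinct colors, the $c$-face being the one opposite the $c$-vertex, together with Lemma~\ref{lm:face-colorability-dual-colex}.

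With these facts the conclusion is immediate, since by definition of $\pi_c$ an edge or face of $\Gamma^\ast$ survives in $\Gamma^{\ast\setminus c}$ if and only if it is incident on no $c$-vertex. For edges: a $cc''$-edge is incident on a $c$-vertex and is deleted, whereas a $dd'$-edge with $d,d'\in\{r,b,g,y\}\setminus\{c\}$ touches no $c$-vertex and survives, so the surviving edges are exactly the $dd'$-edges. For faces: a $d$-face with $d\neq c$ has a $c$-vertex among its three vertices and is deleted, while every $c$-face has vertices only of colors $\{r,b,g,y\}\setminus\{c\}$ and therefore survives, so the surviving faces are exactly the $c$-faces. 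I expect the only real work to be the incidence bookkeeping in the second paragraph, in particular verifying that a $c$-edge of $\Gamma$ is shared by the three non-$c$ cells rather than some smaller or larger set; this is where one must invoke the full $3$-colex structure (four-valent vertices, one cell of each color at every vertex) and not merely the existence of the colorings asserted in Lemma~\ref{lm:face-colorability-dual-colex}. Once that incidence count is fixed, the survival criterion delivers both claims with no further computation.
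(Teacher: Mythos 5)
Your proof is correct and follows essentially the same route as the paper's: both arguments rest on the observation that a cell of $\Gamma^\ast$ is deleted exactly when it touches a $c$-vertex, that the only face of each tetrahedron avoiding its $c$-vertex is the $c$-face (equivalently, a $c$-face has vertices of the three non-$c$ colors only), and that surviving edges join non-$c$ vertices and are therefore $dd'$-colored. The only difference is cosmetic: you derive the incidence facts explicitly by dualizing the primal 3-colex (a $c$-edge of $\Gamma$ lying in the three non-$c$ cells), whereas the paper asserts them directly from the colorability properties of $\Gamma^\ast$.
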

\begin{proof}
Suppose that we delete the vertices colored $c$ in $\Gamma^{\ast\setminus c}$; this leads to the deletion of the edges and faces that are incident on these vertices. On any $c$-colored vertex only the $c$-colored faces are not incident. 
Any face colored with $c'\in \{ r,b,g,y\}\setminus c$ is incident on some $c$-vertex. This leads to deletion of all but one face of each tetrahedron incident on any $c$-colored vertex. The remaining face is colored $c$. Thus $\Gamma^{\ast\setminus c}$
contains only  $c$-colored faces. Since only $d,d'$-vertices are present in $\Gamma^{\ast\setminus c}$, the edges connecting them are colored $dd'$. 
\end{proof}

\begin{lemma}\label{lm:3cells-minor}
The 3-cells of $\Gamma^{\ast\setminus c}$ can be indexed by the $c$-vertices of $\Gamma^\ast$ and 
the boundary of a 3-cell $\nu_v \in \mathsf{C}_3(\Gamma^{\ast\setminus c})$ is the  sum of $c$-faces of  tetrahedrons incident on $v$.
\end{lemma}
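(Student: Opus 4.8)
The plan is to establish the two assertions separately, using the colored face/vertex structure of tetrahedra recorded in the preceding discussion. For the indexing, I would first invoke Lemma~\ref{lm:face-colorability-dual-colex}: every tetrahedron of $\Gamma^\ast$ carries exactly one vertex of each color, hence a \emph{unique} $c$-vertex. Consequently the tetrahedra of $\Gamma^\ast$ are partitioned by their $c$-vertex, so writing $T_v$ for the set of tetrahedra incident on a given $c$-vertex $v$, the families $\{T_v\}_v$ are pairwise disjoint and cover $\mathsf{C}_3(\Gamma^\ast)$. As already observed in the construction of $\Gamma^{\ast\setminus c}$, deleting the $c$-vertices merges the tetrahedra within each $T_v$ into a single 3-cell $\nu_v$, and distinct $c$-vertices yield distinct 3-cells precisely because the $T_v$ are disjoint. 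Thus $v\mapsto\nu_v$ is the claimed bijection between the $c$-vertices of $\Gamma^\ast$ and $\mathsf{C}_3(\Gamma^{\ast\setminus c})$.

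For the boundary, I would represent $\nu_v$ by the mod-2 chain $\sum_{\nu\in T_v}\nu$ and compute its boundary in $\Gamma^\ast$ via Eq.~\eqref{eq:qubit-bndry}, obtaining $\partial\bigl(\sum_{\nu\in T_v}\nu\bigr)=\sum_{\nu\in T_v}\sum_d f_{\ni\nu}^d$. The crux is to sort the faces of the tetrahedra in $T_v$ by whether or not they contain $v$. Since the $c$-face of a tetrahedron is the face opposite its $c$-vertex, for each $\nu\in T_v$ the face $f_{\ni\nu}^c$ is the unique face of $\nu$ not containing $v$, while its three remaining (non-$c$) faces all contain $v$.

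Next I would run the cancellation argument on the simplicial incidence structure. If two tetrahedra of $T_v$ share a triangular face, then that face consists of their three common vertices; since $v$ cannot be the (distinct) fourth vertex of both, $v$ must lie on the shared face, so every shared face within $T_v$ is a non-$c$ face. Conversely, each non-$c$ face of a $\nu\in T_v$ contains $v$, hence its second incident tetrahedron also contains $v$ and lies in $T_v$; thus every non-$c$ face appears in exactly two summands and cancels modulo 2. Each $c$-face $f_{\ni\nu}^c$, however, does not contain $v$, so its second incident tetrahedron cannot lie in $T_v$ and it survives. Collecting terms gives $\partial\bigl(\sum_{\nu\in T_v}\nu\bigr)=\sum_{\nu\in T_v}f_{\ni\nu}^c$; as these are $c$-faces, Eq.~\eqref{eq:pi-face} shows they are fixed by $\pi_c$, so this is exactly the boundary of $\nu_v$ in $\Gamma^{\ast\setminus c}$, namely $\sum_{\nu\in T_v}\pi_c(\nu)$, as asserted.

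The main obstacle is this cancellation step, i.e.\ verifying that the faces containing $v$ pair up (two tetrahedra of $T_v$ per such face) while the $c$-faces have their second neighbour outside $T_v$. Making it precise amounts to the two elementary facts about the simplicial dual above---that a shared face of two tetrahedra around $v$ must contain $v$, and that each triangular face is incident on exactly two tetrahedra---so I would argue directly from the incidence combinatorics of $\Gamma^\ast$ rather than invoking any global manifold structure of the star of $v$.
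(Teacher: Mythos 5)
Your proof is correct and follows essentially the same route as the paper's: you index the 3-cells by $c$-vertices using the fact that each tetrahedron has a unique vertex of each color, and you compute $\partial\nu_v$ by summing Eq.~\eqref{eq:qubit-bndry} over the tetrahedra incident on $v$ and cancelling the non-$c$ faces in pairs, which is exactly the paper's steps $(a)$ and $(b)$. The only difference is that you spell out the cancellation combinatorics (shared faces within $T_v$ must contain $v$, and the $c$-faces' second tetrahedron lies outside $T_v$) that the paper asserts in a single line, which is a welcome but not essentially different elaboration.
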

\begin{proof}
The deletion of  a $c$-vertex causes all the tetrahedrons incident on it to be combined into one single 3-cell in $\Gamma^{\ast\setminus c}$. 
Since the four vertices of each tetrahedron are different colors, two tetrahedrons can be merged only if they are incident on the same $c$-vertex. Thus two distinct $c$-vertices lead to distinct 3-cells. Hence, the 3-cells of $\Gamma^{\ast\setminus c}$ can be indexed by the $c$-vertices of $\Gamma^\ast$.
The deletion of the $c$-vertex $v$, creates a 3-cell and leaves behind a $c$-face for every tetrahedron incident on $v$. 
These $c$-faces enclose the 3-cell formed by merging the qubits incident on $v$, therefore they must form its boundary.
Denote by $\nu_v$ such a 3-cell. Then its boundary $\partial(\nu_v)$ is given by 
\begin{eqnarray}
\partial(\nu_v) & =  & \sum_{\nu:v\in \nu}\partial(\nu) \overset{(a)}{=}\sum_{\nu:v\in \nu} f_{\ni\nu}^c +\sum_{\nu:v\in \nu} \sum_{i\neq  c } f_{\ni\nu}^i\\
&\overset{(b)}{=}&\sum_{\nu:v\in \nu} f_{\ni\nu}^c,
\end{eqnarray}
which is precisely the sum of $c$-faces of tetrahedrons incident on $v$. Note that 
 $(a)$ follows from Eq.~\eqref{eq:qubit-bndry} while  $(b)$ is due to the fact that every $c'$-face incident on $v$ is shared between exactly two qubits incident 
on $v$  causing the second summation to vanish.
\end{proof}

Let $\Gamma^{\ast \setminus cc'}$ denote  the minor of  $\Gamma^\ast$ obtained by deleting all vertices colored $c$ and $c'$.
We assume that $c\neq c'$ for the rest of the paper.  Denote this operation as $\pi_{cc'}$. Then we have 
\begin{eqnarray}
\pi_{cc'}(\Gamma^\ast)= \Gamma^{\ast\setminus cc'} \label{eq:minor-map-2}.
\end{eqnarray}
Note that the order of deletion of vertices does not matter, therefore we have 
\begin{eqnarray}
\pi_{cc'}(\Gamma^\ast)=  \pi_{c'c}(\Gamma^\ast)\label{eq:minor-map-symmetry}.
\end{eqnarray}

As in  case of $\Gamma^{\ast \setminus c}$, the vertices and edges of of $\Gamma^{\ast \setminus cc'}$ are a subset of $\Gamma^{\ast \setminus cc'}$ and can be easily identified. The faces and 3-cells are not so obvious. 
The faces of $\Gamma^{\ast\setminus cc'} $ are not faces in the parent complexes $\Gamma^\ast$  or  $\Gamma^{\ast\setminus c}$. We need to define the faces and 3-cells of $\Gamma^{\ast\setminus cc'} $. We make one small observation before defining them. 
\begin{lemma}\label{lm:bichromatic-edges-minor-complex}
Let $\Gamma$ be a 3-colex and $ c,c'\in \{r,b,g,y \}$. Then  $\Gamma^{\ast\setminus cc'}$ has only $dd'$-colored edges 
where $\{d,d'\} =\{ r,b,g,y\}\setminus \{c,c' \}$. 
\end{lemma}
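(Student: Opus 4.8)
The plan is to argue directly from the coloring of $\Gamma^\ast$ recorded in Lemma~\ref{lm:face-colorability-dual-colex}, in close parallel to the edge-coloring portion of the proof of Lemma~\ref{lm:monochromatic-faces-minor-complex}. By that lemma $\Gamma^\ast$ is $4$-vertex-colorable and $6$-edge-colorable, and since every edge of $\Gamma^\ast$ lies in some tetrahedron whose four vertices carry distinct colors, each edge joins two vertices of \emph{distinct} colors. Thus every edge inherits a two-color label: an edge between an $x$-vertex and a $y$-vertex is an $xy$-edge with $x\neq y$, and these six labels are precisely the $6$-edge-coloring. This proper coloring is the only structural input I need.

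Next I would observe that $\pi_{cc'}$ acts on the $1$-skeleton purely by deletion and creates no new edges: applying the edge-deletion rule of Eq.~\eqref{eq:pi-edge} for both colors, an edge survives in $\Gamma^{\ast\setminus cc'}$ exactly when it is incident on no $c$-vertex and no $c'$-vertex, while the merging of tetrahedrons described earlier touches only the $3$-cells. Hence the edge set of $\Gamma^{\ast\setminus cc'}$ is a subset of that of $\Gamma^\ast$. Deleting all $c$- and $c'$-vertices removes every edge having an endpoint colored $c$ or $c'$, so each surviving edge has both endpoints among the remaining colors $\{d,d'\}=\{r,b,g,y\}\setminus\{c,c'\}$. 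Because the two endpoints of any edge carry distinct colors, the only admissible possibility is one $d$-vertex joined to one $d'$-vertex; by definition such an edge is a $dd'$-edge, which is the claim.

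Since this is essentially a bookkeeping argument, I do not anticipate a genuine obstacle. The single point warranting care is confirming that the minor construction introduces no edge joining two surviving vertices of the same color, which would break the two-color labeling. This holds precisely because $\pi_{cc'}$ is a deletion map on the $1$-skeleton, so the proper $4$-vertex-coloring of $\Gamma^\ast$ restricts to a proper coloring of $\Gamma^{\ast\setminus cc'}$, leaving $dd'$ as the unique possible edge color.
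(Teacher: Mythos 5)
Your proposal is correct and follows essentially the same route as the paper's proof: delete all $c$- and $c'$-vertices, note that every edge incident on them disappears, and conclude that the surviving edges join $d$- and $d'$-vertices and are therefore $dd'$-colored. Your additional remark that the proper $4$-vertex-coloring of $\Gamma^\ast$ rules out same-colored endpoints merely makes explicit a fact the paper uses implicitly, so the two arguments coincide in substance.
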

\begin{proof}
Suppose that we delete all the vertices colored $c,c'$ in $\Gamma^\ast$, then all edges incident on $c$-vertices
and $c'$-vertices will be deleted. Thus only edges between $d$ and $d'$ colored vertices will remain. These
edges are colored $dd'$. 
\end{proof}

Let $e$ be a $cc'$-edge $e$ in  $\Gamma^\ast$ and  $\mathcal{V}_e$ the qubits containing  $e$. 
\begin{eqnarray}
\mathcal{V}_e=\{ \nu \mid e\in \nu \}
\end{eqnarray}
Let  $e_{\ni \nu}^{dd'}$ be the unique $dd'$-edge in $\nu$.
For each of the qubits in $\mathcal{V}_e$, exactly one $dd'$-edge will survive in  $\Gamma^{\ast\setminus cc'} $. 
No two qubits in $\mathcal{V}_e$ share the same $dd'$-edge. 
Further, the surviving $dd'$ edges will form a 
cycle in $\Gamma^{\ast\setminus cc'}$,  see Fig.~\ref{fig:face-cc-complex} for illustration. 
This can be seen as follows. 
Let  $\mathsf{V}_{cc'}( \mathcal{V}_e)$,  denote the vertices of $\mathcal{V}_e$  that remain in $ \Gamma^{\ast\setminus cc'}$. 
Every such vertex $v$ is incident on exactly two qubits of $\mathcal{V}_e$. Therefore, two $dd'$-edges are incident on $v$.
Hence, the $dd'$-edges of $\mathcal{V}_e$ form  a cycle.
This cycle is of trivial homology since it is on the boundary of a 3-cell (formed by the qubits in $\mathcal{V}_e$). We can associate a face to this cycle such that it lies entirely in the 3-cell. In other words,   to every $cc'$-edge in $\Gamma^\ast$, we can associate a face in $\Gamma^{\ast\setminus cc'}$. The boundary of this face is the collection of the $dd'$-edges belonging to the qubits incident on $e$, alternatively, 
\begin{eqnarray}
\partial f_e & = &\sum_{\nu \in \mathcal{V}_e } e_{\ni \nu}^{dd'}. \label{eq:bndry-f-cc}
\end{eqnarray} 
\begin{figure}[ht]
\includegraphics[scale=0.6]{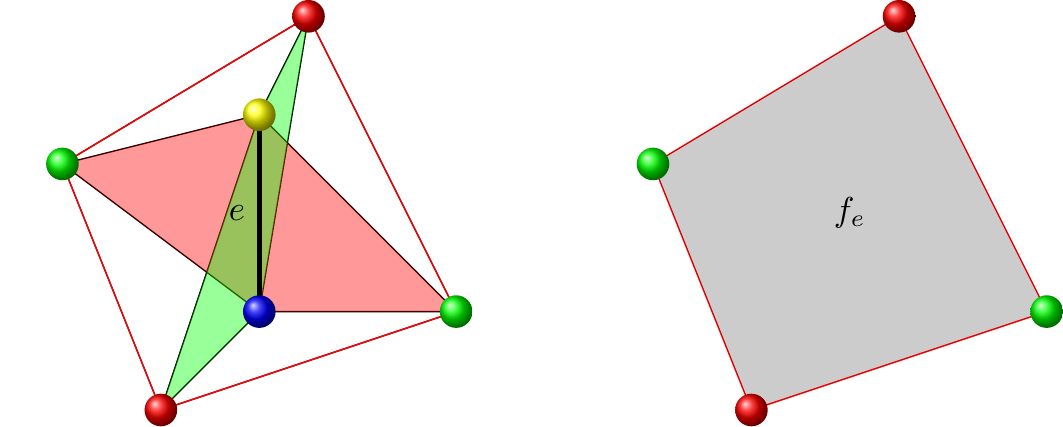}
\caption{Face in $\Gamma^{\ast\setminus cc'}$. Consider the $by$-edge (in bold) and the qubits incident on it. Since only  $r$,  $g$-vertices (i.e. red and green)  will survive in 
$\Gamma^{\ast\setminus by}$, each of these surviving vertices will have exactly two $rg$-edges incident on them. Therefore these $rg$-edges form a cycle.}
\label{fig:face-cc-complex}
\end{figure}

The qubits incident on  distinct $cc'$-edges $e_1$ and $e_2$ will be disjoint intersecting in either edges or vertices, so the faces associated to them, i.e. $f_{e_1}$ and  $f_{e_2}$, will also be disjoint and intersect in edges or vertices. Therefore the faces 
are well defined. 
The preceding discussion proves the following result. 
\begin{lemma}\label{lm:faces-in-cc}
Let $\Gamma$ be a 3-colex and $ c,c'\in \{r,b,g,y \}$. Then the faces of the minor complex $\Gamma^{\ast\setminus cc'}$   are in one to one correspondence with $cc'$ edges of $\Gamma^*$. 
\end{lemma}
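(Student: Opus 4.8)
The plan is to exhibit an explicit bijection between the set of $cc'$-edges of $\Gamma^\ast$ and the set of faces of $\Gamma^{\ast\setminus cc'}$, and then to verify that it is both injective and surjective. The forward map is the assignment $e \mapsto f_e$ introduced just above the statement: to a $cc'$-edge $e$ we attach the face $f_e$ whose boundary is the cycle $\partial f_e = \sum_{\nu \in \mathcal{V}_e} e_{\ni\nu}^{dd'}$ of surviving $dd'$-edges coming from the qubits incident on $e$, where $\{d,d'\} = \{r,b,g,y\}\setminus\{c,c'\}$.

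First I would confirm that each $f_e$ is a legitimate face of the minor complex, which amounts to reassembling the discussion preceding the lemma. By Lemma~\ref{lm:bichromatic-edges-minor-complex} the only edges surviving in $\Gamma^{\ast\setminus cc'}$ are $dd'$-edges; each vertex of $\mathcal{V}_e$ that survives the deletion is incident on exactly two qubits of $\mathcal{V}_e$ and hence on exactly two such edges, so the surviving edges close up into a cycle. This cycle lies on the boundary of the 3-cell obtained by merging the qubits in $\mathcal{V}_e$ and is therefore null-homologous, so it bounds a disk inside that 3-cell which we declare to be $f_e$, with boundary given by Eq.~\eqref{eq:bndry-f-cc}.

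Injectivity is the next step. Here I would use that distinct $cc'$-edges $e_1$ and $e_2$ have incident-qubit sets $\mathcal{V}_{e_1}$ and $\mathcal{V}_{e_2}$ that overlap only in lower-dimensional cells, so no single qubit contributes its $dd'$-edge to both boundary cycles. Hence $\partial f_{e_1} \neq \partial f_{e_2}$, and since in the minor complex a face is determined by the cycle it fills, $f_{e_1} \neq f_{e_2}$.

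The surjectivity direction is where the main obstacle lies, because the faces of $\Gamma^{\ast\setminus cc'}$ are not inherited from $\Gamma^\ast$ but must be recognized through the construction itself. I would argue that every face of $\Gamma^{\ast\setminus cc'}$ is a disk filling a cycle of $dd'$-edges, and that any such cycle is the $dd'$-boundary of the qubit set of exactly one $cc'$-edge of $\Gamma^\ast$; reading the assignment $e \mapsto \mathcal{V}_e$ backwards then recovers $e$ from the face. The delicate part is establishing that the construction exhausts all faces, that is, that no face of the minor complex escapes the description as some $f_e$, which forces one to reason from the definition of the $2$-cells of $\Gamma^{\ast\setminus cc'}$ rather than from any property transported from the parent complex.
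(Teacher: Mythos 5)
Your forward map, the cycle argument, and the well-definedness check faithfully reproduce the paper's discussion preceding the lemma. The genuine problem is your treatment of surjectivity. You single it out as ``where the main obstacle lies,'' and then never supply the argument: you say you ``would argue'' that every face of $\Gamma^{\ast\setminus cc'}$ is a disk filling a cycle of $dd'$-edges arising from exactly one $cc'$-edge, and you flag the exhaustiveness claim as delicate, but no proof of it appears. What you are missing is that this obstacle does not exist, for a reason you brush against but never state: the minor complex has \emph{no} faces other than the ones this construction supplies. Deleting the $c$- and $c'$-vertices destroys every face of $\Gamma^\ast$ (each face of $\Gamma^\ast$ is a triangle of a tetrahedron, so its vertices carry three of the four colors and it must be incident on a $c$- or $c'$-vertex), so $\Gamma^{\ast\setminus cc'}$ inherits no 2-cells whatsoever from its parent. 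The paper therefore \emph{defines} the faces of $\Gamma^{\ast\setminus cc'}$ to be precisely the disks $f_e$, one per $cc'$-edge $e$, with boundary given by Eq.~\eqref{eq:bndry-f-cc}. Under this definition surjectivity is vacuous; the lemma is a summary of the construction (``The preceding discussion proves the following result''), and its entire mathematical content is what you already verified: each cycle of surviving $dd'$-edges bounds a disk inside the 3-cell formed by $\mathcal{V}_e$, and distinct edges produce faces that meet only along edges and vertices, so attaching all the $f_e$ yields a consistent cell structure.

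A secondary wrinkle: your injectivity step infers $f_{e_1}\neq f_{e_2}$ from $\partial f_{e_1}\neq \partial f_{e_2}$, using that no single qubit contributes its $dd'$-edge to both cycles. Disjointness of $\mathcal{V}_{e_1}$ and $\mathcal{V}_{e_2}$ (each tetrahedron has exactly one $cc'$-edge) indeed gives that, but \emph{distinct} qubits can still share a $dd'$-edge, so equality of the two boundary cycles is not excluded a priori; moreover, in a cell complex a 2-cell is not determined by its boundary cycle (two distinct disks may share one, as in a 2-sphere), so your premise is also shaky. The cleaner route, implicit in the paper, is positional: $f_{e_1}$ and $f_{e_2}$ lie in the interiors of 3-cells built from disjoint sets of qubits, hence are distinct cells regardless of their boundaries. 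In short, your proof is incomplete on its own terms where you say it is hardest, and the repair is not a stronger argument but the recognition that the faces of $\Gamma^{\ast\setminus cc'}$ exist only by the very assignment $e\mapsto f_e$ you are trying to invert.
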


With faces of $\Gamma^{\ast\setminus cc'}$ defined, the 3-cells of $\Gamma^{\ast\setminus cc'}$ can be identified. Let $v$ be a $c$ or $c'$-vertex. 
Consider the  edges incident on $v$. 
Each of these edges corresponds to a face in $\Gamma^{\ast\setminus cc'}$, by Lemma~\ref{lm:faces-in-cc}. 
We define the volume enclosed by these faces, such that it contains $v$, to be a 3-cell of $\Gamma^{\ast\setminus cc'}$. 
Since every such $v$ leads to a 3-cell in $\Gamma^{\ast\setminus cc'}$, we have the following. 

\begin{lemma}\label{lm:3cells-in-cc}
Let $\Gamma$ be a 3-colex and $ c,c'\in \{r,b,g,y \}$. Then 3-cells of the minor complex $\Gamma^{\ast\setminus cc'}$  are in one to one correspondence with vertices in $\mathsf{C}_0^c(\Gamma^*)\cup \mathsf{C}_0^{c'}(\Gamma^*)$. 
\end{lemma}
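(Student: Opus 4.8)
The plan is to exhibit the explicit map $v \mapsto \nu_v$ from $\mathsf{C}_0^c(\Gamma^\ast)\cup \mathsf{C}_0^{c'}(\Gamma^\ast)$ to the 3-cells of $\Gamma^{\ast\setminus cc'}$ described just before the statement, and then verify that it is well defined, injective, and exhaustive. Write $\{d,d'\}=\{r,b,g,y\}\setminus\{c,c'\}$ and fix $v\in\mathsf{C}_0^c(\Gamma^\ast)$ (the case $v\in\mathsf{C}_0^{c'}$ is identical). Every tetrahedron in the star of $v$ (the tetrahedra incident on $v$) has vertex colors $\{c,c',d,d'\}$ and hence carries exactly one $cc'$-edge at $v$, joining $v$ to its $c'$-vertex; conversely every $cc'$-edge $e_i$ incident on $v$ arises this way. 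Since a tetrahedron has a single $c'$-vertex, the sets $\mathcal{V}_{e_i}$ partition the star of $v$. By Lemma~\ref{lm:faces-in-cc} each $e_i$ determines a face $f_{e_i}$ of $\Gamma^{\ast\setminus cc'}$, and I would take $\nu_v$ to be the volume these faces enclose around $v$.

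First I would establish well-definedness by checking that $\{f_{e_i}\}$ closes up, i.e. $\sum_i \partial f_{e_i}=0 \pmod 2$. Using Eq.~\eqref{eq:bndry-f-cc} and the fact that the $\mathcal{V}_{e_i}$ exhaust the star of $v$, we get $\sum_i \partial f_{e_i}=\sum_i\sum_{\nu\in\mathcal{V}_{e_i}} e^{dd'}_{\ni\nu}=\sum_{\nu\ni v} e^{dd'}_{\ni\nu}$. A given $dd'$-edge $(d,d')$ occurs in the star of $v$ only through tetrahedra containing the triangle $\{v,d,d'\}$; this triangle is a $c'$-face of $\Gamma^\ast$, so it is shared by exactly two tetrahedra, both lying in the star of $v$. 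Hence every $dd'$-edge appears an even number of times and the sum vanishes, so the faces form a $2$-cycle. As this cycle sits inside the star of $v$, which is a topological ball in the bulk, it bounds, and the enclosed region containing $v$ is the desired 3-cell $\nu_v$. (On a complex with boundary the count at boundary faces must be adjusted, as elsewhere in the paper.)

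Next I would argue that the $\nu_v$ assemble into a genuine cell decomposition and that the assignment is bijective. Each face $f_e$ with $e=(v,w)$ a $cc'$-edge lies in the tetrahedra of $\mathcal{V}_e$ and there separates $v$ from $w$; it therefore appears on the boundary of exactly the two constructed cells $\nu_v$ and $\nu_w$, matching the requirement that each face of a complex bounds two 3-cells. Dually, each tetrahedron $\{c,c',d,d'\}$ is cut by the single face coming from its $cc'$-edge into a piece on the $c$-vertex side and a piece on the $c'$-vertex side, so the cells $\nu_v$ tile $\Gamma^{\ast\setminus cc'}$ with no region left over. This gives surjectivity (every 3-cell is some $\nu_v$) and injectivity (the interior point $v$ identifies $\nu_v$, and cells with distinct centers are distinct), completing the correspondence.

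I expect the main obstacle to be the well-definedness step: making precise that the faces $f_{e_i}$, which by the construction are fixed only up to some freedom, can be chosen coherently so that they both close into a sphere and are shared consistently between adjacent cells. The parity computation above controls the closure at the level of boundaries, but pinning down a single geometric realization, so that $f_e$ is literally the same face on $\nu_v$ and on $\nu_w$, is the delicate point. I would resolve it by fixing once and for all a representative disk for each $cc'$-edge inside the union of its incident tetrahedra, exactly as in the definition of the faces of $\Gamma^{\ast\setminus cc'}$, and then reading off all incidences directly from the original 3-colex.
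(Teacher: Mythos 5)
Your construction is exactly the paper's: just before the lemma, the paper defines the 3-cells of $\Gamma^{\ast\setminus cc'}$ as the volumes, containing a given $c$- or $c'$-vertex $v$, enclosed by the faces $f_{e_i}$ associated (via Lemma~\ref{lm:faces-in-cc}) to the $cc'$-edges incident on $v$, and then states the lemma as an immediate consequence of that definition. Your proposal follows the same route, and the extra verifications you supply --- the partition of the star of $v$ by the sets $\mathcal{V}_{e_i}$, the parity argument showing $\sum_i \partial f_{e_i}=0$ so the faces close up, and the observation that each $f_e$ with $e=(v,w)$ is shared by exactly $\nu_v$ and $\nu_w$ --- are correct details that the paper leaves implicit.
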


We can also extend $\pi_{cc'}$ to the cells of $\Gamma^\ast$ as we did for $\pi_c$. Then we can write 
\begin{subequations}
\begin{eqnarray}
\pi_{cc'}(v)&=&v \mbox{ if  } v\in \mathsf{C}_0(\Gamma^\ast) \setminus (\mathsf{C}_0^c(\Gamma^\ast) \cup \mathsf{C}_0^{c'}(\Gamma^\ast)) \label{eq:pi-cc-site}\\
\pi_{cc'}(e)&=&e \mbox{ if  } e  \in \mathsf{C}_1^{dd'}(\Gamma^\ast); d,d'\not\in \{c,c' \} \label{eq:pi-cc-edge}\\
\pi_{cc'}(\nu)&=&e_{\ni \nu}^{dd'}=\partial(\pi_c(\nu)) \cap \mathsf{C}_1^{dd'}(\Gamma^\ast); \nu \in \mathsf{C}_3(\Gamma^\ast) \label{eq:pi-cc-cell}
\end{eqnarray}
\end{subequations}
where $e_{\in \nu}^{dd'}$ is the unique $dd'$-edge in $\nu$.
In these equations  and henceforth we assume $d,d'\in \{r,b,g,y \}\setminus \{c,c'\} $ and $d\neq d'$. None of the faces of $\Gamma^{\ast\setminus c}$ or $\Gamma^\ast$  will survive in 
$\Gamma^{\ast\setminus cc'} $. 
Since the faces of $\Gamma^{\ast }$ do not carry any information about the qubits and the $Z$ error syndromes, we are not particularly interested in them; we have some freedom  as to how to define $\pi_{cc'}$ for faces in $\Gamma^{\ast\setminus c}$.

We define the edge boundary of a 3-cell in $\Gamma^{\ast}$ as 
\begin{eqnarray}
\delta \nu = \sum_{cc'}\pi_{cc'}(\nu)= \sum_{cc'} e_{\ni \nu}^{dd'}
 \label{eq:edge-bndry-qubit}
\end{eqnarray}
We call it the edge boundary because $\pi_{cc'}(\nu)$ is an edge, see Eq.~\eqref{eq:pi-cc-cell}.
We extend $\delta$ to multiple 3-cells  linearly.

A simple example of 3-colex and related complexes are shown in Fig.~\ref{fig:color_code}. We can relate the various complexes and the objects of interest for us as follows. 
\begin{center}
\begin{tabular}{l|c|c|c|c}\hline
&$\Gamma$ &  $\Gamma^\ast$   &$\Gamma^{\ast\setminus c}$&$\Gamma^{\ast\setminus cc'}$  \\
        \hline
        Qubit& vertex & tetrahedron & triangle & edge \\
        $Z$-check & face &  edge & edge & ---\\
        $X$-check  & 3-cell & vertex & vertex & vertex\\ \hline
\end{tabular}
\end{center}
\begin{figure*}[t]
\centering
\includegraphics{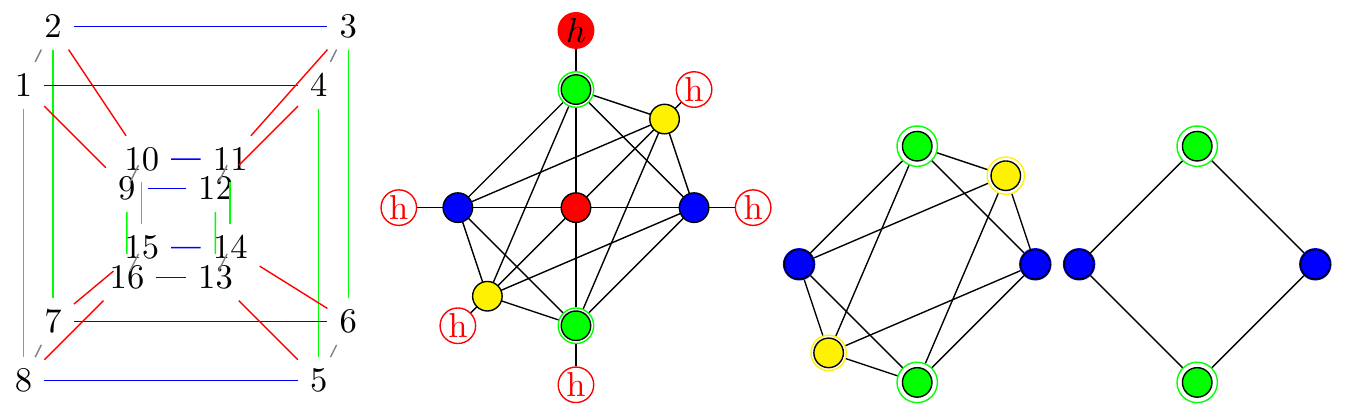}
    \caption{A 3-colex $\Gamma$ and its dual $\Gamma^*$; $i$-cells of $\Gamma^\ast$ correspond to the $3-i$ cells of $\Gamma$. The minor complexes $\Gamma^{* \setminus r}$ and $\Gamma^{* \setminus ry}$ are also shown. }
    \label{fig:color_code}
\end{figure*}

The preceding lemmas lead to the following corollary. 
\begin{corollary}\label{co:complexity-of-maps}
Let $\Gamma$ be a 3-colex with $v$ vertices, $e=2v$ edges,  $f_{cc'}$ $cc'$-faces, $\nu_c$ $c$-cells.
Let the total number of faces be $f=\sum_{cc'} f_{cc'}$ and 3-cells be $\nu=\sum_c \nu_{c}$.
The following table summarizes the number of cells in $\Gamma^{\ast\setminus c}$ and $\Gamma^{\ast\setminus cc'}$ where 
$c,c',d,d'\in\{ r,b,g,y\}$ are distinct. 
\end{corollary}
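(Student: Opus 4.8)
The plan is to fill in the table entry by entry, treating each of the four cell-dimensions (vertices, edges, faces, $3$-cells) separately for the two minors $\Gamma^{\ast\setminus c}$ and $\Gamma^{\ast\setminus cc'}$. The starting point is the cell correspondence between $\Gamma$ and its dual: an $i$-cell of $\Gamma^\ast$ is a $(3-i)$-cell of $\Gamma$, so that $\Gamma^\ast$ has $\nu$ vertices (with $\nu_c$ of color $c$), $f$ edges (with $f_{cc'}$ of color $cc'$), $2v$ faces, and $v$ three-cells, the latter being the tetrahedra that carry the qubits. With these baseline counts in hand, I would track how each deletion operator $\pi_c$ or $\pi_{cc'}$ acts on each color class, invoking the structural lemmas already established.

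The vertex and $3$-cell entries are the most direct. Deleting the $c$-vertices (resp.\ the $c$- and $c'$-vertices) removes exactly those color classes, leaving $\nu-\nu_c$ vertices in $\Gamma^{\ast\setminus c}$ and $\nu-\nu_c-\nu_{c'}$ in $\Gamma^{\ast\setminus cc'}$. For the $3$-cells I would quote Lemma~\ref{lm:3cells-minor}, which indexes the $3$-cells of $\Gamma^{\ast\setminus c}$ by the $c$-vertices of $\Gamma^\ast$, giving $\nu_c$; and Lemma~\ref{lm:3cells-in-cc}, which places the $3$-cells of $\Gamma^{\ast\setminus cc'}$ in bijection with $\mathsf{C}_0^c(\Gamma^\ast)\cup\mathsf{C}_0^{c'}(\Gamma^\ast)$, giving $\nu_c+\nu_{c'}$.

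The edge and face entries rely on the coloring lemmas. By Lemma~\ref{lm:monochromatic-faces-minor-complex} the surviving edges of $\Gamma^{\ast\setminus c}$ are precisely the $dd'$-edges with $d,d'\neq c$, so their number is $\sum_{d,d'\neq c} f_{dd'}=f-\sum_{d\neq c} f_{cd}$; by Lemma~\ref{lm:bichromatic-edges-minor-complex} only the single class $dd'$ complementary to $\{c,c'\}$ survives in $\Gamma^{\ast\setminus cc'}$, giving $f_{dd'}$. For the faces of $\Gamma^{\ast\setminus cc'}$ I would invoke Lemma~\ref{lm:faces-in-cc}, which matches them one-to-one with the $cc'$-edges of $\Gamma^\ast$, i.e.\ $f_{cc'}$ of them. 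The one entry that is not an immediate appeal to a lemma is the face count of $\Gamma^{\ast\setminus c}$: here I would first observe that the $c$-face of each tetrahedron is the face opposite its $c$-vertex, so its three vertices carry the non-$c$ colors and it is incident on no $c$-vertex in either of the two tetrahedra it bounds; hence every $c$-face survives the deletion. Counting them through the tetrahedra---each of the $v$ tetrahedra contributes exactly one $c$-face and each such face is shared by precisely two tetrahedra---yields $v/2$ surviving faces, consistent with the observation that $\pi_c$ identifies the two qubits sharing a common $c$-face.

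I expect the face count of $\Gamma^{\ast\setminus c}$ to be the only step requiring genuine care, since it is the one place where the factor of one-half enters and where one must check that the deletion destroys no $c$-face; every remaining entry follows by reading off the surviving color classes and the bijections from Lemmas~\ref{lm:monochromatic-faces-minor-complex}--\ref{lm:3cells-in-cc}. Assembling these eight counts into the table completes the proof.
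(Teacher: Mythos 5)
Your proposal is correct and follows essentially the same route as the paper's own proof: baseline counts from the duality between $i$-cells of $\Gamma^\ast$ and $(3-i)$-cells of $\Gamma$, the coloring and bijection lemmas (Lemmas~\ref{lm:monochromatic-faces-minor-complex}, \ref{lm:bichromatic-edges-minor-complex}, \ref{lm:faces-in-cc}, \ref{lm:3cells-in-cc}) for most entries, and the direct count of one surviving $c$-face per tetrahedron, each shared by two tetrahedra, for the $v/2$ faces of $\Gamma^{\ast\setminus c}$. The only cosmetic difference is that you cite Lemma~\ref{lm:3cells-minor} for the 3-cell count of $\Gamma^{\ast\setminus c}$, where the paper re-runs the vertex-merging argument inline.
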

\begin{center}
\begin{tabular}{l|c|c|c|c}\hline
&$\Gamma$ &  $\Gamma^\ast$   &$\Gamma^{\ast\setminus c}$&$\Gamma^{\ast\setminus cc'}$  \\
        \hline
        3-cells&$\nu$ & $v$ &$\nu_c$ & $\nu_{c}+\nu_{c'}$ \\ 
        Faces & $f$ & $2v$&$ v/2$ & $f_{cc'}$\\ 
        Edges & $2v$ &$f$ &$ f_{dc'}+f_{c'd'}+f_{dd'}$ & $f_{dd'}$\\
        Vertices & $v$ & $\nu$&$ \nu_{c'}+\nu_{d}+\nu_{d'}$ &$\nu_{d}+\nu_{d'}$\\ \hline
\end{tabular}
\end{center}

\begin{proof}
The $i$-cells in dual complex $\Gamma^\ast$ are in one to one correspondence with the $(3-i)$-cells of $\Gamma$. Now suppose that $\Gamma^\ast$ is modified so that all vertices colored $i$ are deleted.
 Then all 3-cells incident on it will be merged to form a new 3-cell. Since all the 3-cells in $\Gamma^{\ast}$ are incident on some 
  $i$-vertex, they will be part of some new 3-cell and   $\Gamma^{\ast\setminus c}$ will contain $c_i$ 3-cells.
 On deleting the $i$-vertices, exactly one face will remain from each 3-cell in $\Gamma^\ast$. Since each of these faces will be shared between two 3-cells there will be $v/2$ faces. The edges in $\Gamma^{\ast \setminus i}$ are those that are incident on vertices other than 
 $i$-vertices.
 
Similarly in $\Gamma^{\ast\setminus cc'}$, only the $d$-vertices,  $d'$-vertices  and the $dd'$-edges will survive.
These are precisely $\nu_d+\nu_{d'}$ vertices and $f_{dd'}$ edges.
The number of  faces and 3-cells of $\Gamma^{\ast\setminus cc'}$ is immediate from 
Lemmas~\ref{lm:faces-in-cc}~and~\ref{lm:3cells-in-cc}.
\end{proof}

\begin{remark}
The minor complexes defined here are the duals of the shrunk complexes defined in \cite{bombin07a}. For example, $\Gamma^{*\setminus b}$ is the exactly the dual of $b$-shrunk complex and $\Gamma^{*\setminus ry}$ is the dual of 
$ry$-shrunk complex.
\end{remark}

\subsection{$X$ type errors on 3D color codes}

Let us now see how to perform error correction on  a color code.  It is helpful to see the (topological) structure of the errors in the dual of the 
3-colex. 
We  analyze the bit flip and phase flip errors separately. Suppose that we have $X$ errors on some set of qubits. In the dual colex the erroneous qubits correspond to a volume. 
Through $\pi_c$ we can associate qubits to faces of the minor complex $\Gamma^{\ast\setminus c}$. Thus we can 
project errors from $\Gamma^\ast$ to $\Gamma^{\ast\setminus c}$.

If a qubit $\nu$ has a bit flip   error, then we place an $X$-error on the image of  $\nu$ in  $\Gamma^{\ast\setminus c}$.  In other words,
\begin{eqnarray}
\pi_c(X_\nu) = X_{\pi_c(\nu)},\label{eq:X-map-1}
\end{eqnarray}
where $\pi_c(X_\nu)$ gives an
$X$ error acting on the qubits in $\Gamma^{\ast\setminus c}$.

A consequence of Eq.~\eqref{eq:X-map-1}, together with linearity of $\pi_c$, is that for two adjacent qubits $\nu_1$ and 
$\nu_2$ sharing a $c$-colored face $f$, we have $\pi_c(X_{\nu_1}X_{\nu_2}) = X_{\pi_c(\nu_1)}X_{\pi_c(\nu_2)} = I$, where we
used the fact that $\pi_c(\nu_1)=\pi_c(\nu_2)=f$.
In other words, a $c$-face common to two qubits in error corresponds to an error free qubit in $\Gamma^{\ast\setminus c}$.

 The syndrome corresponding to bit flip errors is associated to edges of $\Gamma^\ast$. In $\Gamma^{\ast\setminus c}$ not all edges
 are present. But if an edge is present, we associate to that edge the same syndrome as in $\Gamma^\ast$. Syndromes in the minor complex are essentially the restriction of the syndromes in $\Gamma^\ast$. Let $s_e$ be the syndrome on edge $e$, then 
 \begin{eqnarray}
\pi_c(s_e)= s_{\pi_c(e)}  = s_e. \label{eq:x-syndrome-map}
 \end{eqnarray}
This is consistent with Eq.~\eqref{eq:pi-edge}.

 At this point we have qubits living on the faces and syndromes on the edges of $\Gamma^{\ast\setminus c}$ just as we would have in a 3D 
 toric code. But it needs to be shown that indeed that we truly have the structure of a 3D toric code  and not merely the appearance of 
 it. This we shall take up next.

 First let us consider the edge type checks on $\Gamma^{\ast\setminus c}$.
Consider an edge $e$ in $\Gamma^{\ast\setminus c}$. Then $e$ is also present in $\Gamma^\ast$. For each qubit incident on $e$ there is an $c$-colored face incident on $e$. These $c$-faces exhaust all the faces incident on $e$ in $\Gamma^{\ast\setminus c}$. Thus in 
the 3D toric code associated to $\Gamma^{\ast\setminus c}$, every face incident on  $e$ participates in that check on $e$ as required for the edge type checks in the 3D toric code. 
Next we look at the projected syndromes on the minor complex.

\begin{theorem}[Projection of $X$ errors onto toric codes]\label{lm:x-err-syn}
Let $s$ be the syndrome for an $X$ error $E$ on the 3D color code defined on a 3-colex $\Gamma$ and $\pi_c(s)$ the restriction of $s$
 on $\Gamma^{\ast\setminus c}$.
Then the error $\pi_c(E)$ in $\Gamma^{\ast\setminus c}$ produces the syndrome $\pi_c(s)$ in the toric code
associated to $\Gamma^{\ast\setminus c}$. 
\end{theorem}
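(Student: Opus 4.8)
The plan is to prove the identity edge by edge: for every edge $e$ of $\Gamma^{\ast\setminus c}$ I would show that the toric-code syndrome produced by $\pi_c(E)$ at $e$ equals $s_e$, which by Eq.~\eqref{eq:x-syndrome-map} is exactly $\pi_c(s)$ restricted to $e$. First I would write down the two quantities to be compared. On the color-code side, the $Z$-check on $e$ is $B_e^Z=\prod_{\nu:e\in\nu}Z_\nu$ (Eq.~\eqref{eq:3d-tcc-stabilizers-dual}), so $s_e$ is the parity of the number of erroneous $3$-cells incident on $e$. On the toric-code side, the edge check on $\Gamma^{\ast\setminus c}$ is $B_e=\prod_{f:e\in\partial(f)}Z_f$ (Eq.~\eqref{eq:stabilizers-3d-dual}), so the syndrome of $\pi_c(E)$ at $e$ is the parity of the number of erroneous faces whose boundary contains $e$.

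The heart of the argument is a geometric correspondence between the $3$-cells incident on $e$ in $\Gamma^\ast$ and the faces incident on $e$ in $\Gamma^{\ast\setminus c}$. By Lemma~\ref{lm:monochromatic-faces-minor-complex} the edge $e$ is a $dd'$-edge with $d,d'\neq c$. For any tetrahedron $\nu$ containing $e$, its unique $dd'$-edge is $e$ itself, and since the $c$-face $f_{\ni\nu}^c=\pi_c(\nu)$ (Eq.~\eqref{eq:cell-bndry}) is the face opposite the $c$-vertex, its three boundary edges are exactly the three bichromatic edges among the non-$c$ vertices; in particular $e\in\partial(\pi_c(\nu))$. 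Conversely, every face of $\Gamma^{\ast\setminus c}$ incident on $e$ is the $c$-face of some tetrahedron containing $e$. I would record this as the statement that $\nu\mapsto\pi_c(\nu)$ sends the $3$-cells incident on $e$ onto the faces incident on $e$.

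With the correspondence in hand I would carry out the mod-$2$ count. By linearity of $\pi_c$ and Eq.~\eqref{eq:X-map-1}, the face $f$ carries an error in $\pi_c(E)$ precisely when an odd number of its $3$-cell preimages are erroneous, so the toric syndrome at $e$ is the double sum $\sum_{f:e\in\partial(f)}\sum_{\nu:\pi_c(\nu)=f}[\nu\in\supp(E)]$ taken modulo $2$, where $[\,\cdot\,]$ is the indicator. Each $3$-cell $\nu$ incident on $e$ contributes to this sum exactly once, through the single pair $(\pi_c(\nu),\nu)$, and by the correspondence above these are all the contributing pairs; hence the double sum collapses to $\sum_{\nu:e\in\nu}[\nu\in\supp(E)]$, which is $s_e$. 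This proves the claimed equality at $e$, and letting $e$ range over all edges of $\Gamma^{\ast\setminus c}$ completes the proof.

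The main subtlety I expect is the two-to-one behaviour of $\pi_c$ on $3$-cells: each $c$-face is shared by two tetrahedra (Eq.~\eqref{eq:cell-bndry} and the surrounding discussion), so two erroneous qubits straddling a $c$-face cancel in $\pi_c(E)$. The delicate point is to verify that this cancellation is faithfully reflected on the syndrome side. Indexing the double sum by the pairs $(f,\nu)$ rather than by faces alone is what makes the collapse valid modulo $2$ even when both preimages of a face are erroneous, and I would make sure this bookkeeping is airtight. I would also invoke the observation made just before the theorem, namely that every face incident on $e$ in $\Gamma^{\ast\setminus c}$ participates in the edge check on $e$, to confirm that the toric-code edge checks used here are the genuine ones of Eq.~\eqref{eq:stabilizers-3d-dual}.
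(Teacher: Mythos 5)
Your proposal is correct and takes essentially the same approach as the paper: an edge-by-edge comparison in which the tetrahedra incident on $e$ are grouped by the $c$-face they project to, so that the parity of erroneous tetrahedra incident on $e$ equals the parity of erroneous projected faces incident on $e$. The paper's proof is your double-sum bookkeeping made explicit as a pairing of the $2m$ incident tetrahedra into $m$ pairs with $r_j = q_{2j-1}\oplus q_{2j}$, which is the same mod-$2$ count.
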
 
\begin{proof} 
We now will show that in 
$\Gamma^{\ast\setminus c}$ the syndrome produced by $\pi_c(E)$ is same as $\pi_c(s)$. Consider any edge in $\Gamma^{\ast\setminus c}$;
by definition  $\pi_c(s_e)=s_e$ where $s_e$ is the syndrome on $e$ with respect to $\Gamma^\ast$. Since $\Gamma$ is 3-colex, an even number of qubits are incident on $e$, say $2m$. Then $s_e=\oplus_{i=1}^{2m} q_i$ where $q_i=1$ if there is an $X$ error on the $i$th qubit and zero otherwise.
Each of these qubits (tetrahedrons) incident on $e$ are projected to a  qubit in $\Gamma^{\ast\setminus c}$. But note that two qubits which share a face are mapped to the same qubit in $\Gamma^{\ast\setminus c}$. Thus there are $m$ qubits (triangles) incident on $e$
with respect to $\Gamma^{\ast\setminus c}$.
These (projected) qubits are in error if and only if one of the parent qubits in $\Gamma^\ast$ are in error. Let  $r_j=1$ if there is an an error on the projected qubit and zero otherwise. Then $r_j=q_{2j-1}\oplus q_{2j}$, where $2j-1$ and $2j$ are the qubits which are projected onto the $j$the qubit in 
$\Gamma^{\ast\setminus c}$.
The syndrome on  $e$ as computed in the 3D toric code is $\oplus_{j=1}^{m} r_j  = 
\oplus_{j=1}^m(q_{2j-1}\oplus q_{2j}) = s_e$. Thus the projected error $\pi_c(E)$ produces the same syndrome as the projected syndrome 
$\pi_c(s)$.
\end{proof}

\begin{corollary}\label{co:valid-syndrome}
Let $s$ be the syndrome for an $X$ error  on the 3D color code defined on a 3-colex $\Gamma$ and $\pi_c(s)$ the restriction of $s$
 on $\Gamma^{\ast\setminus c}$. Then $\pi_c(s)$ is a valid syndrome for (an $X$ error on) the toric code on $\Gamma^{\ast\setminus c}$.
\end{corollary}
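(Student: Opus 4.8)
The plan is to derive the corollary directly from Theorem~\ref{lm:x-err-syn}, which already supplies essentially all of the content. First I would fix what it means for a bit string on the edges of $\Gamma^{\ast\setminus c}$ to be a \emph{valid} syndrome for the associated toric code: it is valid precisely when it is produced by some legitimate $X$ error, i.e. when it lies in the image of the syndrome map of the edge-type checks $B_e$. Equivalently, as recalled in the preliminaries, a valid $X$-error syndrome on a 3D toric code is a collection of cycles of trivial homology, since such a syndrome arises as the boundary of a surface formed by the union of erroneous faces.

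With this characterization in hand, the corollary is immediate. Theorem~\ref{lm:x-err-syn} exhibits an explicit $X$ error, namely the projected error $\pi_c(E)$, whose syndrome in the toric code on $\Gamma^{\ast\setminus c}$ is exactly $\pi_c(s)$. Hence $\pi_c(s)$ is realizable as the syndrome of an honest error on the toric code, and is therefore valid by definition. The only point that still needs checking is that $\pi_c(E)$ really is a bona fide $X$ error on the toric code, i.e. that it is supported on the faces (qubits) of $\Gamma^{\ast\setminus c}$; this follows from the extension of $\pi_c$ to 3-cells in Eq.~\eqref{eq:cell-bndry} together with the action on errors in Eq.~\eqref{eq:X-map-1}, which send each erroneous qubit $\nu$ to a well-defined $c$-face of $\Gamma^{\ast\setminus c}$.

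There is essentially no obstacle here; the substantive work was done in the theorem, and the corollary merely records the logical consequence that the toric-code decoder is never handed an inconsistent input. If instead one wanted a self-contained argument avoiding the theorem, I would verify validity intrinsically: since $\pi_c(s)$ is the restriction of the color-code syndrome to the surviving edges, and each such edge carries the parity of the projected erroneous qubits incident on it, I would check that $\pi_c(s)$ has vanishing boundary at every vertex of $\Gamma^{\ast\setminus c}$ (so that it is a cycle) and that it bounds (so that its homology class is trivial). But invoking Theorem~\ref{lm:x-err-syn} is far cleaner, so that is the route I would take.
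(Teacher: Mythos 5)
Your proposal is correct and takes essentially the same route as the paper: both deduce the corollary directly from Theorem~\ref{lm:x-err-syn}, observing that $\pi_c(s)$ is realized as the syndrome of the genuine toric-code error $\pi_c(E)$ and is therefore valid by definition. The additional points you spell out --- the precise definition of validity and the check that $\pi_c(E)$ is supported on the faces of $\Gamma^{\ast\setminus c}$ --- are harmless elaborations of what the paper leaves implicit.
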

\begin{proof}
By Theorem~\ref{lm:x-err-syn}, $\pi_c(s)$ is the same as the syndrome produced by an $X$ error on $\Gamma^{\ast\setminus c}$. 
Therefore, it must be a valid syndrome for an $X$ error for the 3D toric code on $\Gamma^{\ast\setminus c}$.
\end{proof}

\begin{lemma}\label{lm:stabgen2stabgen}
Let $v$ be a c-vertex in $\Gamma^\ast$ and $\nu_v$ be the 3-cell in $\Gamma^{\ast\setminus c}$ obtained by merging all the qubits incident on $v$. 
Then the $X$-type stabilizer $B_v^X$ of the color code on 
$\Gamma^\ast$ is mapped to an  $X$-type stabilizer generator of the toric code on $\Gamma^{\ast\setminus c}$.  
\begin{eqnarray}
\pi_c(B_v^X) =  B_{\nu_v}^X \mbox{ and }\pi_{c'}(B_v^X) = I \mbox{ for }  c'\neq c
\end{eqnarray}
\end{lemma}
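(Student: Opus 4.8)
The plan is to evaluate $\pi_c(B_v^X)$ and $\pi_{c'}(B_v^X)$ directly from the definition of the color-code stabilizer in the dual complex, pushing the projection maps through the product one qubit at a time, and then to read off the result using the boundary formula of Lemma~\ref{lm:3cells-minor}. The only real content is the bookkeeping of which faces survive the mod-$2$ cancellation, and this is entirely dictated by the color of $v$.

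For the first identity I would begin with $B_v^X=\prod_{\nu:v\in\nu}X_\nu$ from Eq.~\eqref{eq:3d-tcc-stabilizers-dual} and use the linearity of $\pi_c$ together with $\pi_c(X_\nu)=X_{\pi_c(\nu)}=X_{f_{\ni\nu}^c}$ (Eqs.~\eqref{eq:X-map-1} and \eqref{eq:cell-bndry}) to obtain $\pi_c(B_v^X)=\prod_{\nu:v\in\nu}X_{f_{\ni\nu}^c}$. Since $X_f^2=I$, this operator is determined by the mod-$2$ sum of its constituent faces, namely $\sum_{\nu:v\in\nu}f_{\ni\nu}^c$: a face survives exactly when it appears an odd number of times. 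By Lemma~\ref{lm:3cells-minor} this sum is precisely $\partial(\nu_v)$, so the support of $\pi_c(B_v^X)$ is the boundary of $\nu_v$ and $\pi_c(B_v^X)=\prod_{f\in\partial(\nu_v)}X_f=B_{\nu_v}^X$ (cf.~Eq.~\eqref{eq:stabilizers-3d-dual}). Phrasing things through the mod-$2$ boundary in this way means I never have to argue distinctness of the $c$-faces by hand; the boundary computation already accounts for any coincidences.

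For the second identity I would apply $\pi_{c'}$ with $c'\neq c$. As $v$ is not a $c'$-vertex it survives in $\Gamma^{\ast\setminus c'}$, and the same manipulation gives $\pi_{c'}(B_v^X)=\prod_{\nu:v\in\nu}X_{f_{\ni\nu}^{c'}}$, whose support is $\sum_{\nu:v\in\nu}f_{\ni\nu}^{c'}$. Here the key observation is reversed: because $v$ carries color $c$, the $c'$-face of each tetrahedron incident on $v$ omits only the $c'$-vertex and therefore \emph{is} incident on $v$; being an interior face of the ball of tetrahedra around $v$, it is shared by exactly two tetrahedra that both contain $v$. Hence every $c'$-face occurs an even number of times and cancels, so $\sum_{\nu:v\in\nu}f_{\ni\nu}^{c'}=0$ and $\pi_{c'}(B_v^X)=I$. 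This is exactly the cancellation recorded in step $(b)$ of the proof of Lemma~\ref{lm:3cells-minor}, which I would simply invoke.

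The crux of the argument, and the only place any care is needed, is the dichotomy between the two colors: relative to a $c$-vertex, the $c$-faces of the incident tetrahedra are the boundary (non-incident) faces and survive, while for any other color the $c'$-faces are incident on $v$, pair up among the incident tetrahedra, and cancel. Since both halves of this dichotomy are already packaged into Lemma~\ref{lm:3cells-minor} and its proof, I expect no genuine obstacle; the work is confined to stating the qubit-wise action of $\pi_c$ and $\pi_{c'}$ cleanly and invoking the boundary computation.
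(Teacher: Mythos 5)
Your proposal is correct and follows essentially the same route as the paper's own proof: push $\pi_c$ through the product qubit-by-qubit and invoke Lemma~\ref{lm:3cells-minor} to identify the surviving $c$-faces with $\partial(\nu_v)$, then for $c'\neq c$ argue that each $c'$-face incident on $v$ is shared by exactly two tetrahedra containing $v$, so the contributions cancel pairwise. Your version merely makes explicit two points the paper leaves implicit — why the $c'$-faces are incident on $v$ at all, and that the cancellation is the same one used in step $(b)$ of Lemma~\ref{lm:3cells-minor} — which is a welcome clarification but not a different argument.
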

\begin{proof}
We have $B_v^X = \prod_{\nu :v\in \nu} X_\nu$. 
Then  
\begin{eqnarray}
\pi_c(B_v^X) & = & \pi_c\left(  \prod_{\nu:v\in \nu } X_\nu\right)  = \prod_{\nu:v\in \nu} X_{\pi_c(\nu)}\\
&\overset{(a)}{=}&\prod_{f\in \partial(\nu_v) }X_f  =B_{\nu_v}^X.
\end{eqnarray}
where $(a)$ follows from Lemma~\ref{lm:3cells-minor}. 
Thus $\pi_c(B_v^X) $  is exactly the $X$-type stabilizer generator defined on the $3$-cell $\nu_v$. 

Now consider a $c'$-face incident on $v$. Such a face is in the boundary of  two qubits incident on $v$.  
This means that for every qubit $\nu$ incident on $v$, there exists another qubit  $\nu'$ incident on $v$ such that $\pi_{c'}(X_\nu) = \pi_{c'}(X_{\nu'})$.
Therefore $\pi_{c'}(B_v^X) =\prod_{\nu:v\in \nu} X_{\pi_c(\nu)}=I$.
\end{proof}
Note that the previous lemma implies that for a $c'$-vertex $\pi_c(B_v^X)=I$. 
\begin{corollary}\label{co:stab2stab}
Let $\overline{S}$ be an $X$-type stabilizer on the color code. Then $\pi_c(\overline{S})$ is an $X$-stabilizer on the toric code 
defined by $\Gamma^{\ast\setminus c}$. Conversely for every $X$-stabilizer $S$ on $\Gamma^{\ast\setminus c}$, there exists an $X$-stabilizer generator 
$\overline{S}$ on $\Gamma^\ast$ such that $\pi_c(\overline{S}) = S$ and $\pi_{c'}(\overline{S})=I$.
\end{corollary}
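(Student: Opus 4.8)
The plan is to reduce both directions to the single-generator statement in Lemma~\ref{lm:stabgen2stabgen} and then exploit the linearity of $\pi_c$.

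For the forward direction, I would write an arbitrary $X$-type stabilizer $\overline{S}$ on the color code as a product of generators, $\overline{S} = \prod_{v \in V} B_v^X$ for some set $V$ of vertices of $\Gamma^\ast$. Applying $\pi_c$ and using its linearity gives $\pi_c(\overline{S}) = \prod_{v \in V} \pi_c(B_v^X)$. By Lemma~\ref{lm:stabgen2stabgen}, each factor is either the toric-code generator $B_{\nu_v}^X$ (when $v$ is a $c$-vertex) or the identity (when $v$ has any other color). Hence $\pi_c(\overline{S})$ is a product of $X$-type generators of the toric code on $\Gamma^{\ast\setminus c}$, and is therefore an $X$-stabilizer.

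For the converse, the key structural input is Lemma~\ref{lm:3cells-minor}, which identifies the $3$-cells of $\Gamma^{\ast\setminus c}$---and hence the $X$-type generators $B_{\nu}^X$ of the toric code---with the $c$-vertices of $\Gamma^\ast$. So any $X$-stabilizer $S$ on $\Gamma^{\ast\setminus c}$ can be written as $S = \prod_{v \in W} B_{\nu_v}^X$, where $W$ is a set of $c$-vertices. I would then lift this by setting $\overline{S} = \prod_{v \in W} B_v^X$, a genuine $X$-stabilizer on the color code. Applying Lemma~\ref{lm:stabgen2stabgen} factor by factor yields $\pi_c(\overline{S}) = \prod_{v \in W} B_{\nu_v}^X = S$, while $\pi_{c'}(\overline{S}) = I$ for every $c' \neq c$, since each $v \in W$ is a $c$-vertex and $\pi_{c'}(B_v^X) = I$ on such vertices.

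The argument is essentially bookkeeping once Lemmas~\ref{lm:stabgen2stabgen} and~\ref{lm:3cells-minor} are in hand. The only point that needs care is the converse: one must be sure that \emph{every} $X$-generator of the toric code genuinely arises from a $c$-vertex, and not from a vertex of some other color or from a cell with no color-code preimage. This is exactly the content of the bijection in Lemma~\ref{lm:3cells-minor}, so the main task is to invoke it cleanly; after that the identities $\pi_c(\overline{S}) = S$ and $\pi_{c'}(\overline{S}) = I$ follow immediately from the single-generator lemma.
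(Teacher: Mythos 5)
Your proof is correct and follows essentially the same route as the paper's: both directions reduce to Lemma~\ref{lm:stabgen2stabgen} applied factor-by-factor to a generator decomposition, with the bijection of Lemma~\ref{lm:3cells-minor} guaranteeing that the $\{B_{\nu_v}^X\}$ exhaust the toric-code generators for the converse. Your write-up is merely more explicit than the paper's (which compresses the same argument into a few lines), including the construction of the lift $\overline{S}=\prod_{v\in W}B_v^X$ and the observation that $\pi_{c'}$ kills each factor.
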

\begin{proof}
The first statement is a consequence of Lemma~\ref{lm:stabgen2stabgen} as $\{B_v^X\}$ generate all $X$-stabilizers of the color code. 
Further, $\{ \pi_c(B_v^X) \} =\{ B_{\nu_v}^X\}$, where $v\in \mathsf{C}_0^c(\Gamma^\ast)$. By Lemma~\ref{lm:3cells-minor},  $\{ B_{\nu_v}^X\}$ generate the $X$-type stabilizers of the toric code on 
$\Gamma^{\ast\setminus c}$. Thus the  converse also holds.
\end{proof}

Let the support of an error $E$ be defined\footnote{Usually the support is defined as a set, here it is convenient to define as a linear combination.} as
\begin{eqnarray}
\supp(E) = \sum_{i:E_i\neq I} i\label{eq:support},
\end{eqnarray}
where $i$ could be a 3-cell, face, or an edge depending on where the qubits are located. 
It follows that 
\begin{eqnarray}
\supp(EE') = \supp(E) +\supp(E').\label{eq:supp-sum}
\end{eqnarray}
We define the boundary of an error in $\Gamma^\ast$ to be the boundary of the volume that corresponds to the collection of  qubits on which the error acts nontrivially. In other words,
\begin{eqnarray}
\partial E = \sum_{\nu:E_\nu\neq I }\partial\nu =  \partial (\supp(E)).\label{eq:error-bndry}
\end{eqnarray}
We use the same notation $\partial $ for boundary of cells as well as operators. Note that $\partial (EE') = \partial E +\partial E'$.

\begin{lemma}[$X$ error boundary]\label{lm:x-err-est}
Using the same notation as in  Theorem~\ref{lm:x-err-syn},
the boundary of an error $E$ in $\Gamma^\ast$ is 
\begin{eqnarray}
\partial E = \sum_{c}  \supp(\pi_{c}(E)).
\end{eqnarray}
\end{lemma}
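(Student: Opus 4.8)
The plan is to reduce the claim to the single-qubit boundary formula of Eq.~\eqref{eq:qubit-bndry} and then simply interchange the order of two summations. First I would expand the left-hand side using the definition of the error boundary in Eq.~\eqref{eq:error-bndry}, writing
\begin{eqnarray}
\partial E = \partial(\supp(E)) = \sum_{\nu:E_\nu\neq I}\partial\nu.
\end{eqnarray}
Next I would substitute $\partial\nu = \sum_c \pi_c(\nu)$ from Eq.~\eqref{eq:qubit-bndry} and swap the two sums (all arithmetic being modulo $2$), obtaining
\begin{eqnarray}
\partial E = \sum_{\nu:E_\nu\neq I}\sum_c \pi_c(\nu) = \sum_c\left(\sum_{\nu:E_\nu\neq I}\pi_c(\nu)\right).
\end{eqnarray}

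The heart of the argument is then to identify the inner sum with $\supp(\pi_c(E))$ for each fixed color $c$. For this I would invoke the additivity of the support under multiplication, Eq.~\eqref{eq:supp-sum}, together with the linearity of $\pi_c$ and the single-qubit rule $\pi_c(X_\nu)=X_{\pi_c(\nu)}$ of Eq.~\eqref{eq:X-map-1}. Writing $E = \prod_{\nu:E_\nu\neq I}X_\nu$, these give $\pi_c(E)=\prod_{\nu:E_\nu\neq I}X_{\pi_c(\nu)}$, and since the support of a single face operator $X_{\pi_c(\nu)}$ is just the face $\pi_c(\nu)$, we get
\begin{eqnarray}
\supp(\pi_c(E)) = \sum_{\nu:E_\nu\neq I}\supp(X_{\pi_c(\nu)}) = \sum_{\nu:E_\nu\neq I}\pi_c(\nu).
\end{eqnarray}
Combining this with the previous display establishes the identity.

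The only point requiring care — and what I would regard as the main, though modest, obstacle — is that the inner sum must be read modulo $2$, so that a $c$-face shared by two erroneous qubits cancels rather than being double-counted. This is precisely the cancellation already noted after Eq.~\eqref{eq:X-map-1}: if $\nu_1,\nu_2$ share a $c$-face $f$, then $\pi_c(X_{\nu_1}X_{\nu_2})=X_fX_f=I$, so $f$ contributes nothing to $\supp(\pi_c(E))$ and likewise appears with even multiplicity in $\sum_{\nu:E_\nu\neq I}\pi_c(\nu)$. Because the support is defined in Eq.~\eqref{eq:support} as a mod-$2$ linear combination and the boundary uses the same arithmetic, the two expressions agree term by term and no separate case analysis is needed. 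The remainder of the proof is purely formal bookkeeping built on the lemmas already established.
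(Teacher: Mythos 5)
Your proposal is correct and follows essentially the same route as the paper's own proof: expand $\partial E$ over the erroneous qubits via Eq.~\eqref{eq:error-bndry}, substitute the single-qubit boundary formula Eq.~\eqref{eq:qubit-bndry}, interchange the two mod-2 summations, and identify the inner sum with $\supp(\pi_c(E))$ using Eq.~\eqref{eq:X-map-1} and the additivity of support in Eq.~\eqref{eq:supp-sum}. Your explicit remark about the mod-2 cancellation of a $c$-face shared by two erroneous qubits is a point the paper leaves implicit in its definition of support, but it is the same argument, not a different one.
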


\begin{proof}
Let $E_\nu$ denote the error on the 3-cell corresponding to the $\nu$th qubit.  Then  we can write 
\begin{eqnarray}
\partial E & \overset{(a)}{=} & 
\sum_{\nu: E_\nu\neq I } \partial \nu 
\overset{(b)}{=} \sum_{\nu: E_\nu \neq I}\sum_{c} \pi_c(\nu)\\
& =& \sum_{c} \sum_{\nu: E_\nu \neq I } \pi_c(\nu)\overset{(c)}{=} \sum_{c}\sum_{\nu:E_\nu\neq I}\supp(\pi_c(E_\nu))\\
&\overset{(d)}{=}&\sum_c \supp(\pi_c(E)) 
\end{eqnarray}
where $(a)$ follows from the definition of the boundary of an error; (b) follows from  
the fact that the boundary of a single qubit is the collection of four faces of the 
tetrahedron that  correspond to the qubit; 
$(c)$ follows from  rearranging the order of summation and the observation that $\pi_c(\nu)$ is the $c$-face in the boundary of $\nu$, see Eq.~\eqref{eq:cell-bndry},  and if $E_\nu\neq I$, then this is the same as the support of $\pi_c(E_\nu)$; $(d)$ follows from Eq.~\eqref{eq:supp-sum} and completes the proof. 
\end{proof}

By Lemma~\ref{lm:x-err-est}, the boundary of an error $E$ can be broken down into four surfaces each lying in a separate minor complex $\Gamma^{\ast\setminus c}$. We show that if these surfaces were modified by the support of a stabilizer in the minor complexes, then these modified surfaces form the boundary of an error $E'$ which is equivalent to $E$ up to a stabilizer i.e. 
$E'=ES$ for some $X$ stabilizer $S$ on the color code.

\begin{lemma}[$X$ error boundary modulo stabilizer]\label{lm:lifting-modulo} 
Let $E$ be an error on $\Gamma^\ast$. Let $S_c$ be $X$-stabilizer generators on $\Gamma^{\ast\setminus c}$. Then 
$\partial E + \sum_c \supp(S_c)$ is the boundary of $ES$, for some  $X$-stabilizer $S$ on $\Gamma^\ast$ i.e. 
\begin{eqnarray}
\partial E + \sum_c \supp(S_c) = \partial (ES)
\end{eqnarray}
\end{lemma}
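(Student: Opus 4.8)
The plan is to construct the stabilizer $S$ explicitly by lifting each $S_c$ back to $\Gamma^\ast$ and then verify the boundary identity by pushing everything through the projection maps. First I would invoke the converse direction of Corollary~\ref{co:stab2stab}: for each color $c$, since $S_c$ is an $X$-stabilizer on $\Gamma^{\ast\setminus c}$, there exists an $X$-stabilizer $\overline{S}_c$ on $\Gamma^\ast$ with $\pi_c(\overline{S}_c) = S_c$ and $\pi_{c'}(\overline{S}_c) = I$ for every $c' \neq c$. I would then set $S = \prod_c \overline{S}_c$, which is a product of $X$-stabilizers and hence itself an $X$-stabilizer on $\Gamma^\ast$, exactly as the statement demands.

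Next I would compute $\partial(ES)$ through the formula of Lemma~\ref{lm:x-err-est}, giving $\partial(ES) = \sum_c \supp(\pi_c(ES))$. Because each $\pi_c$ is linear, $\pi_c(ES) = \pi_c(E)\,\pi_c(S)$, so by Eq.~\eqref{eq:supp-sum} the support splits as $\supp(\pi_c(ES)) = \supp(\pi_c(E)) + \supp(\pi_c(S))$. Summing over $c$ and applying Lemma~\ref{lm:x-err-est} once more to the first group of terms recovers $\partial E$, leaving $\partial(ES) = \partial E + \sum_c \supp(\pi_c(S))$.

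The final step is to show that $\pi_c(S) = S_c$ for every $c$, which reduces the remaining sum to $\sum_c \supp(S_c)$ and closes the argument. Expanding $\pi_c(S) = \prod_{c'} \pi_c(\overline{S}_{c'})$ and using the two defining properties of the lifts---namely $\pi_c(\overline{S}_{c'}) = I$ for $c' \neq c$ and $\pi_c(\overline{S}_c) = S_c$---collapses the product to $S_c$, completing the verification.

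The crux of the argument, rather than any computational obstacle, is the observation that the four minor complexes decouple: each $S_c$ lives on its own $\Gamma^{\ast\setminus c}$, and the lifts $\overline{S}_c$ can be chosen so that they vanish under every other projection. This orthogonality is precisely what the converse in Corollary~\ref{co:stab2stab} supplies, and without it the cross terms $\pi_c(\overline{S}_{c'})$ with $c' \neq c$ would contaminate the boundary and the clean identity would break down. Once that decoupling is in hand, the remainder is a routine bookkeeping of supports under the linear maps $\pi_c$.
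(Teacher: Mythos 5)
Your proposal is correct and follows essentially the same route as the paper: both lift each $S_c$ to $\overline{S}_c$ via the converse of Corollary~\ref{co:stab2stab}, set $S=\prod_c \overline{S}_c$, and reduce the identity to Lemma~\ref{lm:x-err-est} together with additivity of boundaries and supports. The only difference is bookkeeping order --- the paper applies Lemma~\ref{lm:x-err-est} to each $\overline{S}_c$ to get $\partial\overline{S}_c=\supp(S_c)$ and then sums, while you apply it to $ES$ and verify $\pi_c(S)=S_c$ --- which is an equivalent rearrangement of the same argument.
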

\begin{proof}
 Since $S_c$ is a stabilizer generator, by Corollary~\ref{co:stab2stab}, there exists an $X$-stabilizer $\overline{S}_c$ on 
 $\Gamma^\ast$ such that  $\pi_c(\overline{S}_c)=S_c$ and $\pi_{c'}(\overline{S}_c)=I$.  By Lemma~\ref{lm:x-err-est}, 
 $\partial(\overline{S}_c) = \supp(S_c) $. 
  Then using 
 $\partial EF = \partial E +\partial F$, we obtain  
$ \partial E\overline{S}_c = \partial E + \partial \overline{S}_c$. Repeating this for all $c$ we have
$ \partial (E\prod_c\overline{S}_c) = \partial E + \sum_c \partial \overline{S}_c$. Letting $\prod_c \overline{S}_c=S$, we
 can write this as  $\partial (E S) = \partial E + \sum_c \supp(S_c)$
as claimed. 
\end{proof}
The importance of the previous result is that we can independently estimate the four components of boundary of an error.  
With these results in hand we can  estimate the boundary of an $X$ error.

\begin{theorem}[Estimating face boundary of $X$ errors]\label{th:x-errors}
The boundary an X-error $E$ on the dual of color code can be estimated by Algorithm~\ref{alg:x-type}.
The algorithm estimates $\partial E$ up to the boundary of an $X$ stabilizer of the color code, provided $\pi_c(E)$ is estimated up 
to a stabilizer on 
$\Gamma^{\ast\setminus c}$, where  $c\in\{r,g,b,y \}$. 
\end{theorem}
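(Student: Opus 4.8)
The plan is to follow the four-step procedure sketched intuitively and realized in Algorithm~\ref{alg:x-type}: project the syndrome $s$ onto each minor complex $\Gamma^{\ast\setminus c}$, decode each projected syndrome as an ordinary $X$-error in the associated 3D toric code, and then superpose the four resulting supports to recover $\partial E$. Correctness will rest on three earlier results. First, Corollary~\ref{co:valid-syndrome} guarantees that each restriction $\pi_c(s)$ is a legitimate toric-code syndrome, so the $c$-th toric decoder is well-posed. Second, Lemma~\ref{lm:x-err-est} expresses the target quantity as $\partial E = \sum_c \supp(\pi_c(E))$, which is exactly the superposition the algorithm computes once each $\pi_c(E)$ is known. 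Third, Lemma~\ref{lm:lifting-modulo} will convert the per-complex stabilizer ambiguities into a single global color-code stabilizer.

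First I would invoke the hypothesis of the theorem: by assumption the toric decoder on $\Gamma^{\ast\setminus c}$ returns an estimate $\hat E_c$ that agrees with the true projection up to a stabilizer, i.e. $\hat E_c = \pi_c(E)\,S_c$ for some $X$-stabilizer $S_c$ on $\Gamma^{\ast\setminus c}$. The algorithm's output is then $\sum_c \supp(\hat E_c)$. Using the linearity of the support in Eq.~\eqref{eq:supp-sum}, each term splits as $\supp(\hat E_c) = \supp(\pi_c(E)) + \supp(S_c)$, so the output equals $\sum_c \supp(\pi_c(E)) + \sum_c \supp(S_c)$.

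Next I would identify the two summands. By Lemma~\ref{lm:x-err-est} the first sum is exactly $\partial E$, so the output reduces to $\partial E + \sum_c \supp(S_c)$. Applying Lemma~\ref{lm:lifting-modulo} with the stabilizer generators $S_c$ yields $\partial E + \sum_c \supp(S_c) = \partial(ES)$ for some $X$-stabilizer $S$ on $\Gamma^\ast$. Hence the algorithm returns the boundary of $ES$, that is, $\partial E$ up to the boundary of a color-code $X$-stabilizer, which is precisely the claim.

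The main obstacle is not any single computation but the consistency of the gluing step. A priori the four toric decoders act independently on four different complexes, each introducing its own stabilizer freedom $S_c$, and it is not obvious that their combined effect corresponds to a legitimate operation on the color code rather than to a homologically nontrivial (logical) discrepancy. This is exactly what Lemma~\ref{lm:lifting-modulo} rules out: through Corollary~\ref{co:stab2stab} each $S_c$ lifts to a color-code stabilizer $\overline S_c$ that is invisible in the other minor complexes, since $\pi_{c'}(\overline S_c)=I$ for $c'\neq c$. Consequently the product $\prod_c \overline S_c$ is a single well-defined color-code stabilizer whose boundary reconciles the four pieces. Once this lifting is established, the remainder of the argument is the bookkeeping described above.
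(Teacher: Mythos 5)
Your proposal is correct and follows essentially the same route as the paper's own (admittedly sketched) proof: both rest on Theorem~\ref{lm:x-err-syn} (via Corollary~\ref{co:valid-syndrome}) to justify decoding the restricted syndrome on each $\Gamma^{\ast\setminus c}$, on Lemma~\ref{lm:x-err-est} to identify $\sum_c \supp(\pi_c(E))$ with $\partial E$, and on Lemma~\ref{lm:lifting-modulo} to absorb the per-complex stabilizer ambiguities $S_c$ into the boundary of a single color-code $X$-stabilizer. Your write-up simply makes explicit the bookkeeping (splitting $\supp(\hat E_c)=\supp(\pi_c(E))+\supp(S_c)$ via Eq.~\eqref{eq:supp-sum}) and the lifting mechanism (Corollary~\ref{co:stab2stab}) that the paper leaves implicit inside its cited lemmas.
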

\begin{proof} 
We only  sketch the proof as it is a straightforward consequence of the results we have shown thus far. 
By Lemma~\ref{lm:x-err-est} the boundary of the error consists of support of $\pi_c(E)$ i.e. the projections of the error on the 3D toric codes on $\Gamma^{\ast\setminus c}$. By Lemma~\ref{lm:x-err-syn}, the syndrome of $\pi_c(E)$ is the restriction of the syndrome on $\Gamma^\ast$. Therefore, $\pi_c(E)$ can be estimated by decoding on $\Gamma^{\ast\setminus c}$. By Lemma~\ref{lm:lifting-modulo}, if the estimate for $\pi_c(E)$ is equivalent up to a stabilizer on 
$\Gamma^{\ast\setminus c}$, we can obtain the boundary of $E$ up to the boundary of a stabilizer on the color code. 
\end{proof}

\begin{algorithm}[H]
\caption{{\ensuremath{\mbox{ Estimating (face) boundary of $X$ type error}}}}\label{alg:x-type}
\begin{algorithmic}[1]
\REQUIRE {A 3-colex $\Gamma$, Syndrome of an $X$ error $E$ }
\ENSURE {  $\mathsf{F}$,  an estimate of $\partial E$ where $\mathsf{F}\subseteq \mathsf{C}_2(\Gamma^\ast)$}
\FOR {each $c \in \{r,b,g,y\} $}
\FOR { each edge $e$ in $\Gamma^{* \setminus c}$ } \hfill // syndrome projection
\STATE $s_{\pi_c(e)} = s_e$ \hfill // $s_e$ is syndrome on edge $e$
\ENDFOR
\STATE Using the projected syndrome on $\Gamma^{* \setminus c}$, estimate the error $\mathsf{F}_c$ by any 3D toric code decoder  for $X$ errors  

\ENDFOR
\STATE Return $\mathsf{F}=\bigcup_c \mathsf{F}_c$
\end{algorithmic}
\end{algorithm}
If any of  component decoders on the minor complexes make a logical error, then Algorithm~\ref{alg:x-type} may fail to produce a valid boundary. 

\subsection{$Z$ type errors on 3D color codes }
One can prove  results similar to previous section for the $Z$-type errors also. As we noted earlier, the syndrome information about $Z$ errors resides on the vertices of $\Gamma^\ast$ while the error corresponds to a volume. To recover the boundary of the error,
the objects of interest are the minor complexes $\Gamma^{\ast\setminus cc'}$. Recall that the edges of $\Gamma^{\ast\setminus cc'}$
are associated with qubits. So we project the errors on cell $\nu$ to $\pi_{cc'}(\nu)$ which is an edge in $\Gamma^{\ast\setminus cc'}$.
We define 
\begin{eqnarray}
\pi_{cc'}(Z_\nu) = Z_{\pi_{cc'}(\nu)}
\end{eqnarray}

Let syndrome on $v\in \mathsf{C}_0(\Gamma^\ast)$ be $s_v$. Then we define the syndrome on $v\in \Gamma^{\ast\setminus cc'}$ as 
\begin{eqnarray}
\pi_{cc'}(s_v) = s_{\pi_{cc'}(v)}=s_v
\end{eqnarray}
so the syndrome on $\pi_{cc'}(\Gamma^\ast)$ is simply the restriction of the syndrome on $\Gamma^\ast$. 
To project the color code onto toric codes, this syndrome must be a valid syndrome on the minor complex.

We next show that both $Z$ errors and their associated syndromes can be projected consistently onto the minor complexes 
$\Gamma^{\ast\setminus cc'}$.

\begin{theorem}[Projection of $Z$ errors onto toric codes]\label{lm:z-err-syn}
Let $E$ be a $Z$-type error on $\Gamma^\ast$ and its  associated syndrome $s$. 
Then the error  $\pi_{cc'}(E)$ in $\Gamma^{\ast\setminus cc'}$ produces the syndrome $\pi_{cc'}(s)$.
\end{theorem}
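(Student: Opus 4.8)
The plan is to mirror the proof of Theorem~\ref{lm:x-err-syn}, now interchanging the roles of vertices and edges. In $\Gamma^{\ast\setminus cc'}$ the qubits live on the surviving $dd'$-edges, while the syndrome for $Z$ errors sits on the surviving vertices, which by Eq.~\eqref{eq:pi-cc-site} are exactly the $d$- and $d'$-vertices of $\Gamma^\ast$. First I would express both syndromes as parity checks. Setting $z_\nu=1$ when $E$ acts as $Z$ on the qubit $\nu$ and $z_\nu=0$ otherwise, the color-code syndrome at a vertex $v$ is $s_v=\bigoplus_{\nu:\,v\in\nu} z_\nu$, since $B_v^X$ anticommutes with $Z_\nu$ precisely when $v\in\nu$; and by definition $\pi_{cc'}(s_v)=s_v$ for every surviving $v$.

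Next I would compute the syndrome produced by $\pi_{cc'}(E)$ in the toric code on $\Gamma^{\ast\setminus cc'}$. By linearity of $\pi_{cc'}$, the error carried by an edge $e'$ is $\tilde z_{e'}=\bigoplus_{\nu:\,\pi_{cc'}(\nu)=e'} z_\nu$, so the vertex check at a surviving vertex $v$ reads $\tilde s_v=\bigoplus_{e'\ni v}\tilde z_{e'}=\bigoplus_{\nu:\,v\in\pi_{cc'}(\nu)} z_\nu$, where the last equality is just a regrouping of the double sum by the image edge of each tetrahedron. Comparing $\tilde s_v$ with $s_v$ therefore reduces the whole statement to the set identity $\{\nu:\,v\in\pi_{cc'}(\nu)\}=\{\nu:\,v\in\nu\}$.

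This set identity is the crux, and the step I expect to need the most care. I would establish it from the coloring structure: by Eq.~\eqref{eq:pi-cc-cell}, $\pi_{cc'}(\nu)$ is the unique $dd'$-edge $e_{\ni\nu}^{dd'}$ of the tetrahedron $\nu$, whose two endpoints are the $d$-vertex and the $d'$-vertex of $\nu$. If $v$ is, say, a $d$-vertex lying on $e_{\ni\nu}^{dd'}$, then $v$ is that edge's $d$-endpoint and hence a vertex of $\nu$, giving $v\in\nu$; conversely, since each tetrahedron carries exactly one vertex of every color, $v\in\nu$ forces $v$ to be the $d$-vertex of $\nu$, which lies on $e_{\ni\nu}^{dd'}$, so $v\in\pi_{cc'}(\nu)$. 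The $d'$-vertex case is symmetric. A point worth checking is that no spurious tetrahedron sneaks in: because $e_{\ni\nu}^{dd'}$ is a genuine edge of $\nu$, one can never have $v\in\pi_{cc'}(\nu)$ without $v\in\nu$, so the two index sets coincide exactly even when several tetrahedrons are merged onto a common edge. With the identity in hand, $\tilde s_v=\bigoplus_{\nu:\,v\in\nu} z_\nu=s_v=\pi_{cc'}(s_v)$ at every surviving vertex, which is the assertion.
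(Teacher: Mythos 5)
Your proposal is correct and takes essentially the same route as the paper's proof: both express the syndrome at a surviving $d$- or $d'$-vertex $v$ as a parity sum over the tetrahedra incident on $v$, and both rest on the same combinatorial fact that these tetrahedra are grouped, parity-preservingly, by the $dd'$-edges through $v$ — you state it as the set identity $\{\nu : v\in\pi_{cc'}(\nu)\}=\{\nu : v\in\nu\}$ and regroup the toric-code check as a double sum, while the paper partitions the color-code sum $\bigoplus_{\nu : v\in\nu} q_\nu$ over the edges $e_i$ incident on $v$. The two bookkeeping directions are equivalent, and your explicit treatment of the forward inclusion even tidies a point the paper leaves implicit (that every qubit containing $e_i$ automatically contains $v$).
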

\begin{proof}
We only need to show the theorem for vertices   $ v \in \Gamma^{\ast\setminus cc'}$. 
Suppose $E$ produces the syndrome $s_v$
on $v$, then 
$s_v = \bigoplus_{\nu :v \in  \nu} q_\nu$, 
where $q_\nu=1$, if there is a $Z$ error on $\nu$ and zero otherwise.
We need to show that $\pi_{cc'}(E)$ produces the syndrome $s_v$ on $v $.
 Every qubit incident on $v$ must have a $dd'$-edge incident on 
$v$, since $v$ must be a $d$ or $d'$-vertex. Two qubits incident on $v$ can share at most one such edge. 
Then we can partition the qubits incident on $v$ depending on the $dd'$ edge on which they are incident.
Let $\{e_1, \ldots, e_m\}$ be these $dd'$-edges.  Then we can write 
\begin{eqnarray}
\{ \nu :v\in \nu\} &=& \cup_{i} \{\nu : v,e_i\in \nu\}\\
s_v &=& \bigoplus_{\nu : v\in \nu} q_\nu = \bigoplus_{i} \bigoplus_{\nu:v,e_i \in \nu } q_\nu = \bigoplus_{i} r_i,
\end{eqnarray}
where $r_i= \bigoplus_{i:v,  e_i\in \nu_i } q_\nu $. 

The qubits containing $e_i$ are projected onto the $dd'$-edge $e_i$ in the minor complex $\Gamma^{\ast\setminus cc'}$ and there is an error on $e_i$ if and only if $r_i=\bigoplus_{\nu:v,  e_i\in \nu } q_\nu =1$. 
Thus the syndrome on the vertex $v$ as computed with respect to the toric code on $\Gamma^{\ast\setminus cc'}$ is $\pi_{cc'}(E)=\bigoplus_{i=1}^m r_i =s_v=\pi_{cc'}(s_v)$ as required. 
\end{proof}

\begin{corollary}[Validity of $Z$ syndrome restriction]\label{co:z-syndrome}
Let $s$ be the syndrome  for a $Z$ error on $\Gamma^\ast$. Then  $\pi_{cc'}(s)$ is a valid syndrome for a 
$Z$ error on $\Gamma^{\ast\setminus cc'}$.
\end{corollary}
\begin{proof}
From Theorem~\ref{lm:z-err-syn}, we see that $\pi_{cc'}(s)$ coincides with the syndrome produced by a $Z$ error on $\Gamma^{\ast\setminus cc'}$. 
Hence, $\pi_{cc'}(s)$ must be a valid syndrome for a $Z$ error on a 3D toric code. 
\end{proof}

Having projected both the error and the syndrome onto the minor complexes, we recover the boundary of the error in steps. To this end we  define the edge boundary of an error $E$ as 
\begin{eqnarray}
\delta E &=& \sum_{\nu:E_\nu\neq I }\delta\nu = \sum_{\nu:E_\nu\neq I }\sum _{cc'} \pi_{cc'}(\nu)\label{eq:edge-bndry-error}\\
&=&\sum_{cc'}{\supp(\pi_{cc'}(E))},\label{eq:edge-bndry-error-a}
\end{eqnarray}
where Eq.~\eqref{eq:edge-bndry-error-a}  follows from  interchanging the order of summation in Eq.~\eqref{eq:edge-bndry-error}.
We see that the edge boundary of $E$ can be recovered by recovering $\pi_{cc'}(E)$. 
The next lemma shows an interesting property of the edge boundary that will help us in correcting $Z$ errors. 

\begin{lemma}[Edge boundary corresponds to an $X$-syndrome] 
\label{lm:z2xerror}
Let $E_Z =\prod_{\nu \in \Omega} Z_\nu$ and $E_X=\prod_{\nu\in \Omega}X_\nu$. Then the syndrome of 
$E_X$ is nonzero on the edges in $\delta E_Z$.
\end{lemma}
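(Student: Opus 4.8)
The plan is to compute both objects named in the statement as formal $\mathbb{F}_2$-linear combinations of edges of $\Gamma^\ast$ and then to compare them coefficient by coefficient. Concretely, I would show that an edge $e$ appears in $\delta E_Z$ precisely when an odd number of qubits of $\Omega$ contain $e$, and that this same parity governs whether $B_e^Z$ anticommutes with $E_X$. Since the two conditions coincide, the edges of $\delta E_Z$ are exactly the edges carrying a nonzero $E_X$-syndrome, which in particular yields the claim.

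First I would unfold the edge boundary of a single qubit. By Eq.~\eqref{eq:edge-bndry-qubit}, $\delta\nu=\sum_{cc'}\pi_{cc'}(\nu)=\sum_{cc'}e_{\ni\nu}^{dd'}$, where $\{d,d'\}=\{r,b,g,y\}\setminus\{c,c'\}$. Because the four vertices of the tetrahedron $\nu$ carry the four distinct colors, its six edges realize the six distinct edge-colors $dd'$ exactly once each, and the assignment $cc'\mapsto dd'$ (complementation inside $\{r,b,g,y\}$) is a bijection on the six unordered pairs. Hence $\delta\nu$ is simply the sum of all six edges of $\nu$, i.e.\ the ordinary combinatorial edge boundary of the tetrahedron. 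Extending linearly over $\Omega$ and regrouping by edge, the coefficient of an edge $e$ in $\delta E_Z=\sum_{\nu\in\Omega}\delta\nu$ is the parity of $|\{\nu\in\Omega:e\in\nu\}|$, so $e\in\delta E_Z$ iff an odd number of qubits of $\Omega$ contain $e$.

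Next I would compute the syndrome of $E_X=\prod_{\nu\in\Omega}X_\nu$. On $\Gamma^\ast$ the $Z$-type checks are the edge stabilizers $B_e^Z=\prod_{\nu:e\in\nu}Z_\nu$ of Eq.~\eqref{eq:3d-tcc-stabilizers-dual}. A product of $X$ operators anticommutes with $B_e^Z$ exactly when its support overlaps $B_e^Z$ on an odd number of qubits, so the syndrome of $E_X$ on $e$ equals the parity of $|\{\nu\in\Omega:e\in\nu\}|$. This is identical to the coefficient obtained above, completing the identification of $\delta E_Z$ with the support of the $E_X$-syndrome.

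The argument is essentially bookkeeping, so there is no deep obstacle; the one step that must be handled with care is the assertion that $\delta\nu$ is the full edge boundary (all six edges) of the tetrahedron rather than some proper subset. This rests on the bijection between the six pairs $cc'$ and their complements $dd'$, together with the fact that each color $dd'$ is realized by a unique edge of $\nu$. I would state this explicitly so that no edge is omitted or double-counted before passing to the parity comparison.
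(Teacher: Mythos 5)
Your proof is correct and takes essentially the same route as the paper's: both arguments reduce the lemma to comparing, edge by edge, the parity of $|\{\nu\in\Omega : e\in\nu\}|$, which simultaneously gives the coefficient of $e$ in $\delta E_Z$ (via Eq.~\eqref{eq:edge-bndry-error}) and the anticommutation of $E_X$ with the check $B_e^Z$. The only difference is that you make explicit the bijection $cc'\mapsto dd'$ showing that $\delta\nu$ is the sum of all six edges of the tetrahedron, a bookkeeping fact the paper's proof leaves implicit.
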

\begin{proof}
Consider the syndrome of $E_X$ on an edge $e$. 
Then 
\begin{eqnarray}
s_e\neq0 \mbox{ iff. } |\{ \nu \in \Omega \mid e\in \nu \} | \mbox { is odd}\label{eq:nonzero-syndrome}
\end{eqnarray}
In other words,  $s_e$ is nonzero if and only if the number of qubits in $\Omega$ incident on $e$ is odd. 
From Eq.~\eqref{eq:edge-bndry-error}, we see that  $e$ will be  present in the edge boundary of $E_Z$ if and only if an odd number of qubits are incident on 
$e$. Thus the syndrome of $E_X$ is nonzero on the edge boundary of $E_Z$.
\end{proof}

We cannot always expect to estimate the edge boundary exactly because the estimates for $\pi_{cc'}(E)$ on the minor complexes 
could be off. The next lemmas show this will not be problem as long as the estimates for $\pi_{cc'}(E)$ are off by stabilizer elements. 

{
\begin{lemma}\label{lm:z-est-stab}
Suppose $S$ is a $Z$-stabilizer on $\Gamma^{\ast\setminus cc'}$, then there exists a $Z$-stabilizer $\overline{S}$ in $\Gamma^\ast$ such that
$\pi_{cc'}(\overline{S}) = S$ and $\pi_{cd}(\overline{S}) = I$ for $xy\neq cc'$.
\end{lemma}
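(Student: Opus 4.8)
The plan is to follow the same two-step pattern used for the $X$-type stabilizers in Lemma~\ref{lm:stabgen2stabgen} and Corollary~\ref{co:stab2stab}: first prove the statement for a single generator, then extend to all $Z$-stabilizers by multiplicativity of the projection maps. The $Z$-stabilizers of the color code on $\Gamma^\ast$ are generated by the edge operators $B_e^Z=\prod_{\nu:e\in\nu}Z_\nu$, while the $Z$-stabilizers of the toric code on $\Gamma^{\ast\setminus cc'}$ (qubits on edges) are the plaquette operators $\prod_{e'\in\partial f_e}Z_{e'}$, one for each face $f_e$, which by Lemma~\ref{lm:faces-in-cc} are indexed by the $cc'$-edges $e$ of $\Gamma^\ast$. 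So it suffices to show, for a $cc'$-edge $e$,
\begin{eqnarray}
\pi_{cc'}(B_e^Z)=\prod_{e'\in\partial f_e}Z_{e'} \mbox{ and } \pi_{xy}(B_e^Z)=I \mbox{ for } xy\neq cc'. \nonumber
\end{eqnarray}

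The first identity is immediate: projecting term by term sends $Z_\nu$ to $Z_{e_{\ni\nu}^{dd'}}$, so $\pi_{cc'}(B_e^Z)=\prod_{\nu\in\mathcal{V}_e}Z_{e_{\ni\nu}^{dd'}}$, which is exactly the plaquette operator on $f_e$ by the boundary formula Eq.~\eqref{eq:bndry-f-cc}. The real work is the vanishing $\pi_{xy}(B_e^Z)=I$ for the five remaining color pairs, which I would split according to $|\{x,y\}\cap\{c,c'\}|$. When $\{x,y\}=\{d,d'\}$ is disjoint from $\{c,c'\}$, the edge $e$ is the unique $cc'$-edge of every tetrahedron $\nu\in\mathcal{V}_e$, hence $\pi_{dd'}(\nu)=e$ for all of them; since $|\mathcal{V}_e|$ is even, the product collapses to $Z_e^{|\mathcal{V}_e|}=I$.

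The case I expect to require the most care, and which is the crux of the lemma, is when $\{x,y\}$ meets $\{c,c'\}$ in exactly one color. Here $\pi_{xy}(\nu)$ is the edge of $\nu$ joining one fixed endpoint of $e$ to the surviving $d$- or $d'$-vertex of $\nu$, so two tetrahedra of $\mathcal{V}_e$ share a $\pi_{xy}$-image precisely when they share that colored vertex. To establish that these images cancel in pairs I would reuse the incidence structure already proved for Lemma~\ref{lm:faces-in-cc}, namely that each surviving ($d$- or $d'$-colored) vertex of $\mathcal{V}_e$ lies on exactly two of its tetrahedra. Since every tetrahedron of $\mathcal{V}_e$ carries a single vertex of the relevant color, this yields a fixed-point-free pairing of $\mathcal{V}_e$ in which partners have equal $\pi_{xy}$-image; the corresponding $Z$ operators therefore cancel and $\pi_{xy}(B_e^Z)=I$.

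Finally I would lift to the lemma exactly as Corollary~\ref{co:stab2stab} is obtained from Lemma~\ref{lm:stabgen2stabgen}. Writing an arbitrary $Z$-stabilizer of the toric code on $\Gamma^{\ast\setminus cc'}$ as $S=\prod_{e\in T}\bigl(\prod_{e'\in\partial f_e}Z_{e'}\bigr)$ for a set $T$ of $cc'$-edges, I set $\overline{S}=\prod_{e\in T}B_e^Z$, which is a $Z$-stabilizer of the color code. The generator identities together with multiplicativity of the projections then give $\pi_{cc'}(\overline{S})=S$ and $\pi_{xy}(\overline{S})=I$ for every $xy\neq cc'$ (reading the garbled index in the statement as ranging over all color pairs other than $cc'$), which is what is claimed.
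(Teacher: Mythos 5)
Your proposal is correct and follows essentially the same route as the paper's proof: reduce to the face generators of the toric code (indexed by $cc'$-edges via Lemma~\ref{lm:faces-in-cc}), verify $\pi_{cc'}(B_e^Z)=B_{f_e}^Z$ using Eq.~\eqref{eq:bndry-f-cc}, kill every other projection $\pi_{xy}(B_e^Z)$ by pairwise cancellation over $\mathcal{V}_e$, and extend multiplicatively to arbitrary $Z$-stabilizers. If anything you are slightly more thorough: the paper argues only the overlapping-pair case explicitly (pairing qubits of $\mathcal{V}_e$ that share a $c'd'$-edge, hence a face) and dismisses the remaining cases as ``similar,'' whereas you also treat the disjoint pair $\{d,d'\}$, where the cancellation mechanism is genuinely different --- all images collapse to the single edge $e$ and one must instead invoke the evenness of $|\mathcal{V}_e|$.
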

\begin{proof}

A $Z$-stabilizer in $\Gamma^{\ast\setminus cc'}$ generated by the face type stabilizers in 
$\Gamma^{\ast\setminus cc'}$. Therefore, it suffices to consider when $S$ is a face type stabilizer. 
By Lemma~\ref{lm:faces-in-cc}, the faces of $\Gamma^{\ast\setminus cc'}$ are in correspondence with the $cc'$-edges of $\Gamma^\ast$. So we can let $S=B_{f_e}^Z$ for some face  $f_e$ in $\Gamma^{\ast\setminus cc'}$
and $cc'$-edge $e$ in $\Gamma^\ast$. The stabilizer of the color code attached to $e$ is given by 
$B_e^Z = \prod_{\nu:e\in \nu } Z_\nu$. Then  
\begin{eqnarray}
\pi_{cc'}(B_e^Z) &= &\prod_{\nu:e\in \nu } Z_{\pi_{cc'}(\nu)}  \stackrel{(a)}{=} \prod_{\nu:e\in \nu}Z_{e_{\ni\nu}^{dd'}},\\
&\stackrel{(b)}{=}& \prod_{t\in \partial(f_e)} Z_t = B_{f_e}^Z  = S,
\end{eqnarray}
where $(a)$ follows from Eq.~\eqref{eq:pi-cc-cell} and $(b)$ from Eq.~\eqref{eq:bndry-f-cc}.

Denote by $\mathcal{V}_e$ the set of qubits incident on $e$. Every qubit $\nu \in \mathcal{V}_e$ contains a $c'd'$ edge. These edges must also be incident on the $c'$-vertex of $e$. Two qubits $\nu$ and $\nu'$ which have the same $c'd'$
edge must share a face since they already share $e$. Hence only two qubits $\nu, \nu' \in \mathcal{V}_e$ can share a 
$c'd'$-edge.  For these qubits we have $\pi_{cd}(\nu) =\pi_{cd}(\nu')$. This implies 
$\pi_{cd}(B_e^Z) = \prod_{\nu:e\in \nu}Z_{e_{\ni\nu}^{c'd'}} =I$. 
Similar arguments can be used to show that $\pi_{cd}(\overline{S}) = I$ for other $xy\neq cc'$. We omit the details. 
\end{proof}
}

\begin{lemma}[Edge boundary modulo stabilizer]\label{lm:z2xsyndrome}
Let $E$ be a $Z$-type error on the color code and $S$  a $Z$-stabilizer on $\Gamma^{\ast\setminus cc'}$. 
Then $\delta E+ \sum_{e:S_e\neq I}e$ 
is the edge boundary of $E\overline{S}$ for some $Z$-stabilizer  $\overline{S}$ on $\Gamma^\ast$.
\end{lemma}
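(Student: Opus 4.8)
The plan is to mirror the proof of Lemma~\ref{lm:lifting-modulo} for the $X$ case, with the edge boundary $\delta$ playing the role of the face boundary $\partial$ and Lemma~\ref{lm:z-est-stab} replacing Corollary~\ref{co:stab2stab}. First I would invoke Lemma~\ref{lm:z-est-stab} to lift the given $Z$-stabilizer $S$ on $\Gamma^{\ast\setminus cc'}$ to a $Z$-stabilizer $\overline{S}$ on $\Gamma^\ast$ satisfying $\pi_{cc'}(\overline{S}) = S$ and $\pi_{dd'}(\overline{S}) = I$ for every other pair $dd' \neq cc'$. This single invocation supplies the operator $\overline{S}$ whose existence is asserted by the lemma, so the remaining task is purely to verify that its edge boundary is the modifying term in the statement.

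Next I would compute the edge boundary of the lift. By the expression for the edge boundary in Eq.~\eqref{eq:edge-bndry-error-a}, we have $\delta \overline{S} = \sum_{dd'} \supp(\pi_{dd'}(\overline{S}))$. Because all projections other than $\pi_{cc'}$ are trivial and $\pi_{cc'}(\overline{S}) = S$, every summand but one vanishes and the sum collapses to $\supp(S) = \sum_{e:S_e\neq I} e$. Thus $\overline{S}$ is precisely an operator whose edge boundary equals the term $\sum_{e:S_e\neq I} e$ appearing in the claim.

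Finally, I would appeal to the additivity of $\delta$ on operators to conclude. Since the support operator is additive modulo $2$ (Eq.~\eqref{eq:supp-sum}) and $\delta$ is defined through $\delta E = \sum_{\nu:E_\nu\neq I}\delta\nu$ by linear extension over the $3$-cells, it follows that $\delta(E\overline{S}) = \delta E + \delta \overline{S}$, exactly as $\partial(EE') = \partial E + \partial E'$ in the face case. Substituting the value of $\delta\overline{S}$ computed above yields $\delta(E\overline{S}) = \delta E + \sum_{e:S_e\neq I} e$, which is the desired identity.

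The argument is almost entirely mechanical once Lemma~\ref{lm:z-est-stab} is in hand; the only point requiring genuine care is the additivity step, namely that $\delta(E\overline{S})$ equals $\delta E + \delta\overline{S}$ rather than some uncancelled symmetric-difference expression over the combined support. I expect this to be the main (if minor) obstacle, and I would resolve it by noting that $\delta$ factors through $\supp$: the mod-$2$ cancellation of any edge shared by $\supp(E)$ and $\supp(\overline{S})$ in the combined support is exactly reproduced by adding the two edge boundaries, since $\supp$ is linear and $\delta$ is its linear image. This is the direct analogue of the remark that $\partial(EE')=\partial E + \partial E'$ used in Lemma~\ref{lm:lifting-modulo}.
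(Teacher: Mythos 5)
Your proposal is correct and follows essentially the same route as the paper's proof: lift $S$ to $\overline{S}$ via Lemma~\ref{lm:z-est-stab}, show $\delta\overline{S}=\sum_{e:S_e\neq I}e$ because all projections $\pi_{xy}(\overline{S})$ with $xy\neq cc'$ are trivial, and conclude by additivity of $\delta$. The only difference is cosmetic — you invoke Eq.~\eqref{eq:edge-bndry-error-a} to collapse the sum in one step, whereas the paper writes out the corresponding operator identities and mod-2 cancellations explicitly.
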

\begin{proof}
By Lemma~\ref{lm:z-est-stab}, there exists a $Z$ stabilizer  ${\overline{S}}$ on $\Gamma^\ast$ such that 
$\pi_{cc'}({\overline{S}})=S$ and $\pi_{xy}({\overline{S}})=I$ for $xy\neq cc'$. 
Therefore, 
\begin{eqnarray}
I &=&\pi_{xy}(\overline{S})=  \pi_{xy}\left(\prod_{\nu:\overline{S}_\nu\neq I} Z_\nu \right) = \prod_{\nu:\overline{S}_\nu\neq I} Z_{\pi_{xy}(\nu)}
\end{eqnarray}
From this we infer that for $xy\neq cc'$, 
\begin{eqnarray}
\sum_{\nu:\overline{S}_\nu\neq I } \pi_{xy}(\nu)&=&0
\end{eqnarray}
Therefore, the edge boundary of $\overline{S}$  has support only in $\Gamma^{\ast\setminus cc'}$. 
Furthermore, substituting for $S$ and $\overline{S}$ in $S=\pi_{cc'}(\overline{S})$ we obtain 
\begin{eqnarray}
\prod_{e:S_e\neq I }Z_e&=& \pi_{cc'}\left(\prod_{\nu:\overline{S}_\nu\neq I} Z_\nu \right)=\prod_{\nu:\overline{S}_\nu\neq I} Z_{\pi_{cc'}(\nu)} \label{eq:bndry-1}
\end{eqnarray}
Eq.~\eqref{eq:bndry-1}, implies that 
\begin{eqnarray}
\sum_{e:S_e\neq I } e &= &\sum_{\nu:\overline{S}_\nu\neq I } \pi_{cc'}(\nu)\\
&=& \sum_{\nu:\overline{S}_\nu\neq I } \pi_{cc'}(\nu)+\sum_{xy\neq cc'}\sum_{\nu:\overline{S}_\nu\neq I } \pi_{xy}(\nu)\\
&=&\delta \overline{S}
\end{eqnarray}
Thus $\delta \overline{S} = \sum_{e:S_e\neq I }e$ and  $\delta E+ \sum_{e:S_e\neq I}e =\delta(E\overline{S})$. 
\end{proof}
With these results in hand we show how to estimate the edge boundary of a $Z$-error from the minor complexes given the syndrome on its vertices. 

\begin{theorem}[Estimating edge boundary of $Z$ errors]\label{th:z-edge-bndry}
Let $\Gamma$ be a 3-colex and $E$ a $Z$-error on the associated color code. 
Algorithm~\ref{alg:z-type} estimates $\delta E$, the edge boundary of $E$,  up to the boundary of a $Z$ stabilizer of the color code, provided $\pi_{cc'}(E)$ is estimated up 
to a stabilizer on  $\Gamma^{\ast\setminus cc'}$.
\end{theorem}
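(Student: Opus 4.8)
The plan is to mirror the argument of Theorem~\ref{th:x-errors}, substituting the edge boundary $\delta$ for the face boundary $\partial$ and invoking the $Z$-type counterparts of the lemmas established for the $X$ errors. The three ingredients I would assemble are: the decomposition of $\delta E$ across the minor complexes (Eq.~\eqref{eq:edge-bndry-error-a}), the consistency of syndromes under projection (Theorem~\ref{lm:z-err-syn} together with Corollary~\ref{co:z-syndrome}), and the invariance of the recovered edge boundary modulo color-code stabilizers (Lemma~\ref{lm:z2xsyndrome}).

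First I would invoke Eq.~\eqref{eq:edge-bndry-error-a} to write $\delta E = \sum_{cc'}\supp(\pi_{cc'}(E))$, so that recovering the edge boundary of $E$ amounts to recovering each projected error $\pi_{cc'}(E)$ separately. By Lemma~\ref{lm:bichromatic-edges-minor-complex} the edges surviving in $\Gamma^{\ast\setminus cc'}$ all carry the complementary color $dd'$, and since the six color pairs $cc'$ induce the six distinct edge colors, these supports occupy pairwise disjoint sets of edges; hence the six reconstructions do not interfere and may be carried out independently. Next, Theorem~\ref{lm:z-err-syn} shows that the syndrome produced by $\pi_{cc'}(E)$ on the toric code attached to $\Gamma^{\ast\setminus cc'}$ is exactly the restriction $\pi_{cc'}(s)$, which by Corollary~\ref{co:z-syndrome} is a valid $Z$ syndrome there. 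Therefore each $\pi_{cc'}(E)$ can be estimated by running any 3D toric-code $Z$-decoder on $\Gamma^{\ast\setminus cc'}$ with input $\pi_{cc'}(s)$—precisely the inner loop of Algorithm~\ref{alg:z-type}—and taking the union of the $dd'$-edges returned over all $cc'$ yields an estimate of $\delta E$.

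Finally I would justify the ``up to a $Z$ stabilizer'' clause. Suppose the toric-code decoder on $\Gamma^{\ast\setminus cc'}$ returns an estimate differing from the true $\pi_{cc'}(E)$ by a $Z$-stabilizer $S_{cc'}$ on that minor complex. Lemma~\ref{lm:z2xsyndrome} then gives $\delta E + \supp(S_{cc'}) = \delta(E\,\overline{S}_{cc'})$ for a lifted $Z$-stabilizer $\overline{S}_{cc'}$ on $\Gamma^\ast$, and Lemma~\ref{lm:z-est-stab} guarantees that $\pi_{xy}(\overline{S}_{cc'})=I$ whenever $xy\neq cc'$. Using the linearity $\delta(EF)=\delta E+\delta F$ and applying this over all six pairs, the combined output of the algorithm equals $\delta(E\,\overline{S})$ with $\overline{S}=\prod_{cc'}\overline{S}_{cc'}$ a $Z$-stabilizer on the color code; thus the estimated edge boundary is that of a color-code error equivalent to $E$ modulo a $Z$-stabilizer, as claimed.

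The hard part will not be any single step—each follows from a lemma already in hand—but rather verifying that the six per-complex estimates assemble into a single coherent color-code edge boundary without cross-contamination. This is exactly what the orthogonality in Lemma~\ref{lm:z-est-stab} secures: because each lifted stabilizer $\overline{S}_{cc'}$ projects trivially under the other five maps $\pi_{xy}$, the correction contributed on one minor complex leaves the edge supports on the others untouched, keeping the terms of Eq.~\eqref{eq:edge-bndry-error-a} decoupled. I would state this decoupling explicitly, since it is what legitimizes running the six toric-code decoders independently and simply unioning their outputs.
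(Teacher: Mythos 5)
Your proposal is correct and follows essentially the same route as the paper's proof: the decomposition of Eq.~\eqref{eq:edge-bndry-error-a}, syndrome consistency via Theorem~\ref{lm:z-err-syn} and Corollary~\ref{co:z-syndrome}, and stabilizer invariance via Lemma~\ref{lm:z2xsyndrome}. If anything, you are more explicit than the paper, which applies Lemma~\ref{lm:z2xsyndrome} (stated for a single minor complex) without spelling out the combination over all six color pairs, whereas you assemble the corrections into a single color-code stabilizer $\overline{S}=\prod_{cc'}\overline{S}_{cc'}$ using the orthogonality $\pi_{xy}(\overline{S}_{cc'})=I$ from Lemma~\ref{lm:z-est-stab}.
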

\begin{proof}
By Corollary~\ref{co:z-syndrome}, the restriction of the syndrome of $E$ is a valid syndrome on $\Gamma^{\ast\setminus cc'}$. By 
Theorem~\ref{lm:z-err-syn}, $\pi_{cc'}(E)$ has the same syndrome as the restriction and $\pi_{cc'}(E)$ can be estimated using a 3D toric decoder on 
$\Gamma^{\ast\setminus cc'}$.
We can reconstruct the edge boundary from $\pi_{cc'}(E)$ using the Eq.~\eqref{eq:edge-bndry-error-a}. By 
 Lemma~\ref{lm:z2xsyndrome} if the estimates for $\pi_{cc'}(E)$ are upto a stabilizer on $\Gamma^{\ast\setminus cc'}$, the estimate  for edge boundary of $E$ will differ by the  boundary of a $Z$-stabilizer on the color code. 
\end{proof}

\begin{algorithm}[H]
\caption{{\ensuremath{\mbox{ Estimating (edge) boundary of $Z$ type error}}}}\label{alg:z-type}
\begin{algorithmic}[1]
\REQUIRE {A 3-colex $\Gamma$,  Syndrome of a $Z$ error $E$ }
\ENSURE {$\mathsf{E}$,  an estimate of the edge boundary $\delta E$,  where $\mathsf{E}\subseteq \mathsf{C}_1(\Gamma^\ast)$}
\FOR {each $c,c' \in \{r,b,g,y\} $}
\FOR { each vertex $v$ in $\Gamma^{*\setminus cc'}$ } \hfill //syndrome projection
\STATE $s_{\pi_{cc'}(v)} = s_v$ \hfill  //$s_v$ is  syndrome on vertex $v$
\ENDFOR

\STATE Estimate the error $\mathsf{E}_{cc'}$ using any 3D toric code decoder for $Z$ errors on $\Gamma^{* \setminus cc'}$
\ENDFOR
\STATE Return $\mathsf{E}=\sum_{c,c'} \mathsf{E}_{cc'}$

\end{algorithmic}
\end{algorithm}

What we have achieved so far  is that we have taken a $Z$ error whose syndrome is on vertices and converted it to a valid syndrome on edges for an $X$ error with the same support. 
We can take this syndrome on edges and recover the face boundary of the error using 
Theorem~\ref{th:x-errors}. 

Note that the  estimate for edge boundary returned by Algorithm~\ref{alg:z-type} need not be a valid edge boundary if
any of the component decoders fail. So we need a method to check the validity of the edge boundary. Recall that the $X$-syndrome  on the toric code is boundary of a collection of faces. Therefore it is a union of homologically trivial cycles. 
We can project the edge boundary $\mathsf{E}$ obtained in Algorithm~\ref{alg:z-type} onto each of the minor complexes $\Gamma^{\ast\setminus c}$. If the edge boundary is valid, then all the homologically nontrivial closed surfaces
 in $(\Gamma^{\ast\setminus c})^\ast$ will intersect with the projected syndrome an even number of times. This test can be carried out in linear time in the number of qubits.

\begin{theorem}[Estimating face boundary of $Z$ type errors] \label{th:z-errors} 
Let $E$ be a $Z$-type error on $\Gamma^\ast$ whose edge boundary is estimated using Algorithm~\ref{alg:z-type}. If the edge boundary is valid, then we can estimate the face boundary of 
$E$ up to a $Z$-stabilizer on the color code using Algorithm~\ref{alg:x-type}. 
\end{theorem}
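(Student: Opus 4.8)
The plan is to chain the two boundary-estimation routines we have already established, using Lemma~\ref{lm:z2xerror} as the bridge from the edge world to the face world. First I would run Algorithm~\ref{alg:z-type} on $E$ to obtain the edge boundary estimate $\mathsf{E}$. By Theorem~\ref{th:z-edge-bndry}, provided each component decoder on $\Gamma^{\ast\setminus cc'}$ returns $\pi_{cc'}(E)$ up to a stabilizer, this estimate satisfies $\mathsf{E} = \delta E + \delta \overline{S}_1 = \delta(E\overline{S}_1)$ for some $Z$-stabilizer $\overline{S}_1$ on the color code. The hypothesis that $\mathsf{E}$ is a \emph{valid} edge boundary is exactly what licenses the next stage: by Lemma~\ref{lm:z2xerror}, $\delta(E\overline{S}_1)$ coincides with the $X$-syndrome of the $X$-error $E_X' = \prod_{\nu} X_\nu$ supported on $\supp(E\overline{S}_1)$, so a valid $\mathsf{E}$ is a genuine $X$-syndrome on $\Gamma^\ast$.

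Second, I would feed $\mathsf{E}$ to Algorithm~\ref{alg:x-type}, now read as an $X$-syndrome. By Theorem~\ref{th:x-errors}, provided the toric decoders on the complexes $\Gamma^{\ast\setminus c}$ each estimate $\pi_c(E_X')$ up to a stabilizer, the output $\mathsf{F}$ equals $\partial E_X'$ up to the face boundary of an $X$-stabilizer $S_X$. Since the face boundary depends only on the support and is linear, $\partial E_X' = \partial(\supp(E\overline{S}_1)) = \partial E + \partial\overline{S}_1$, whence $\mathsf{F} = \partial E + \partial\overline{S}_1 + \partial S_X$.

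The crux is then to show that the combined ambiguity $\partial\overline{S}_1 + \partial S_X$ is the face boundary of a single $Z$-stabilizer. The first term already is one, so the work is to rewrite $\partial S_X$. Here I would invoke the fact that in the dual of the 3-colex every vertex $v$ carries both the $X$-stabilizer $B_v^X = \prod_{\nu:v\in\nu} X_\nu$ and the dependent $Z$-stabilizer $B_v^Z = \prod_{\nu:v\in\nu} Z_\nu$, which share exactly the same support, namely the tetrahedrons incident on $v$. Writing a general $X$-stabilizer as $S_X = \prod_v (B_v^X)^{a_v}$ and setting $S_Z = \prod_v (B_v^Z)^{a_v}$, the two operators have identical support, so $\partial S_X = \partial S_Z$, and $S_Z$ is itself a $Z$-stabilizer. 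Therefore $\partial\overline{S}_1 + \partial S_X = \partial(\overline{S}_1 S_Z)$ with $\overline{S}_1 S_Z$ a $Z$-stabilizer, giving $\mathsf{F} = \partial E + \partial(\overline{S}_1 S_Z)$, which is $\partial E$ up to the face boundary of a $Z$-stabilizer as claimed.

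The main obstacle I anticipate is precisely this reconciliation of the stabilizer type: Algorithm~\ref{alg:x-type} is an $X$-error routine and naturally leaves an $X$-stabilizer ambiguity, whereas the statement demands a $Z$-stabilizer ambiguity. The resolution hinges entirely on the CSS structure of the color code, specifically the coincidence in support between $B_v^X$ and the dependent $B_v^Z$ on each vertex of $\Gamma^\ast$; this is what lets the two ambiguities be merged into one. I would also note in passing that the validity test described before the theorem (projecting $\mathsf{E}$ onto the complexes $\Gamma^{\ast\setminus c}$ and checking that every homologically nontrivial closed surface meets it an even number of times) is what certifies $\mathsf{E}$ to be a legitimate $X$-syndrome, and is indispensable: without it the intermediate toric decoders invoked inside Algorithm~\ref{alg:x-type} need not return a consistent face boundary.
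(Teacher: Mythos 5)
Your proposal is correct and follows essentially the same route as the paper's proof: estimate the edge boundary up to a $Z$-stabilizer via Theorem~\ref{th:z-edge-bndry}, reinterpret it as an $X$-syndrome via Lemma~\ref{lm:z2xerror}, run Algorithm~\ref{alg:x-type}, and then absorb the resulting $X$-stabilizer ambiguity into a $Z$-stabilizer ambiguity using the fact that every $X$-stabilizer of the color code has a $Z$-stabilizer of identical support. Your treatment is in fact more explicit than the paper's sketch, particularly in writing $S_X=\prod_v (B_v^X)^{a_v}$ and $S_Z=\prod_v (B_v^Z)^{a_v}$ to justify the support-coincidence step, which the paper simply asserts.
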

\begin{proof}
By Theorem~\ref{th:z-edge-bndry} we can estimate the edge boundary of $E$  up to a $Z$-stabilizer boundary. By Lemma~\ref{lm:z2xsyndrome} these edges are precisely the syndrome for an $X$ error. By Lemma~\ref{lm:z2xerror}, this $X$ error has the same support as $E$ up to a $X$-stabilizer. But in a color code for every $X$-stabilizer there exists a $Z$-stabilizer with the same support. 
Thus the final boundary of $E$ is estimated up to a $Z$-stabilizer provided all the intermediate estimates from 
Algorithms~\ref{alg:x-type}~\&~\ref{alg:z-type} 
are all up to a stabilizer on the respective 3D toric code decoders. 
\end{proof}

\subsection{ Decoding 3D color codes}\label{ssec:decoding}
The projection onto the toric codes allow us to decode the 3D color code. Before we can give the complete decoding algorithm we need one more component. 
Following Theorems~\ref{th:x-errors}~and~\ref{th:z-errors}, we only end up with the boundary of the error. 
We need to  identify the qubits which are in error. The procedure for  lifting the boundary to volume is given in
Algorithm~\ref{alg:lifting}. The main idea behind this algorithm is the fact that the color code is connected and we can partition the qubits into two groups: those inside and those outside of the boundary. The following lemma justifies the procedure in Algorithm~\ref{alg:lifting}. 

\begin{algorithm}[H]
\caption{{\ensuremath{\mbox{Lifting a boundary to a volume}}}}\label{alg:lifting}
\begin{algorithmic}[1]
\REQUIRE {Complex $\Gamma^*$, Set of faces $\mathsf{F} \subseteq \mathsf{C}_2(\Gamma^\ast)$}
\ENSURE {$\Omega \subseteq \mathsf{C}_3(\Gamma^\ast)$ such that $\mathsf{F}$ is the boundary of $\Omega$}

\STATE Set $\Omega= \emptyset;  m_{\nu} =0$ for all $ \nu \in  \mathsf{C}_3(\Gamma^\ast)$ \hfill // Initialization
\STATE Initialize $\Omega =\{\nu_o \}, m_{\nu_o} =1$   \hfill // For some 3-cell $\nu_o$
\WHILE{$m_\mu =0  $ for some $\mu\in \mathcal{N}_\nu$ with $m_\nu\neq 0$} 
\FOR{each $\mu \in \mathcal{N}_\nu$} \hfill // $\mathcal{N}_v:=$ 3-cells sharing a face with $\nu$
\IF{$m_\mu = 0$}
\IF {$\mu\cap \nu \in \mathsf{F}$}
\STATE $m_\mu= -m_\nu$ \hfill //Qubits on different side of error boundary
\ELSE 
\STATE $m_\mu= m_\nu$ \hfill //Qubits on same side of error boundary
\ENDIF
\IF{$m_\mu = 1$}
\STATE $\Omega = \Omega \cup \{\mu \}$
\ENDIF

\ELSE 
\IF {$\mu\cap \nu \in \mathsf{F}$ and $m_\mu\neq -m_\nu$ }
\STATE $\Omega =\emptyset$; Exit \hfill// $\mathsf{F}$ not a valid boundary 
\ENDIF
\IF {  $\mu\cap \nu \not\in \mathsf{F}$ and $m_\mu\neq m_\nu$}
\STATE $\Omega =\emptyset$; Exit \hfill// $\mathsf{F}$ not a valid boundary 
\ENDIF

\ENDIF
\ENDFOR
\ENDWHILE

\IF{$|\Omega| > 
|\mathsf{C}_3\setminus\Omega|$}
\STATE $\Omega =\mathsf{C}_3\setminus\Omega$ \hfill //Pick the smaller volume
\ENDIF
\end{algorithmic}
\end{algorithm}

\begin{lemma}[Lifting the boundary of error]\label{lm:lifting}
Algorithm~\ref{alg:lifting} will give the smallest collection of $3$-cells $\Omega \subseteq \mathsf{C}_3(\Gamma^\ast)$ 
such that $ \partial \Omega  = \mathsf{F}$.  If $\mathsf{F}$ is not a valid boundary, then 
the algorithm returns an empty set. 
\end{lemma}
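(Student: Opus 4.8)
The plan is to reinterpret Algorithm~\ref{alg:lifting} as a graph traversal that two-colors the 3-cells, and then to identify the resulting coloring with a chain $\Omega$ whose boundary is $\mathsf{F}$. First I would introduce the adjacency graph $G$ whose vertices are the 3-cells of $\Gamma^\ast$ and whose edges join two 3-cells sharing a face. Connectivity of the complex makes $G$ connected, so the while-loop initiated at $\nu_o$ eventually assigns a label $m_\nu\in\{+1,-1\}$ to every 3-cell. The propagation rule in the loop is exactly that two adjacent cells receive opposite labels if and only if the face they share lies in $\mathsf{F}$.

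The core of the argument is the equivalence between a globally consistent labeling and $\mathsf{F}$ being a valid boundary. If the traversal completes without triggering either exit condition, I would set $\Omega=\{\nu:m_\nu=1\}$ and verify the boundary relation face by face: a face $f$ shared by cells $\nu,\mu$ lies in $\mathsf{F}$ precisely when $m_\nu\neq m_\mu$, hence precisely when exactly one of $\nu,\mu$ belongs to $\Omega$; summing over all faces gives $\partial\Omega=\mathsf{F}$ through Eq.~\eqref{eq:qubit-bndry}. Conversely, if $\mathsf{F}=\partial\Omega_0$ for some $\Omega_0$, the labeling $m_\nu=+1$ exactly for $\nu\in\Omega_0$ (normalized so that $m_{\nu_o}=1$) already obeys the propagation rule, so the traversal reproduces it and neither exit condition is ever met.

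The main obstacle is to show that the two local consistency checks detect \emph{every} global obstruction, so that the algorithm returns the empty set exactly when $\mathsf{F}$ is not a boundary. I would settle this with a spanning-tree argument on $G$: the first labeling of each cell corresponds to a tree edge and never triggers a check, while each later encounter of an already-labeled cell corresponds to a non-tree edge closing a fundamental cycle. Such a non-tree edge passes its check if and only if the fundamental cycle it closes meets $\mathsf{F}$ in an even number of faces. Since the fundamental cycles generate the entire cycle space of $G$ over $\mathbb{Z}_2$, all checks passing is equivalent to every cycle meeting $\mathsf{F}$ evenly, which is exactly the cut condition characterizing those edge sets of the form $\partial\Omega$. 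Therefore an inconsistency is flagged, and the empty set returned, precisely when $\mathsf{F}\notin\operatorname{im}(\partial_3)$.

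Finally, for minimality I would use that every face of $\Gamma^\ast$ is shared by exactly two 3-cells, so $\partial(\mathsf{C}_3)=0$ and $\mathsf{C}_3\setminus\Omega$ is also a solution of $\partial\Omega'=\mathsf{F}$. Connectivity forces $\ker\partial_3=\{0,\mathsf{C}_3\}$, so $\Omega$ and its complement are the only two preimages of $\mathsf{F}$; the concluding comparison in the algorithm then returns whichever is smaller, which is the smallest collection of 3-cells with boundary $\mathsf{F}$.
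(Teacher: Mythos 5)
Your proof is correct and rests on the same basic idea as the paper's: read Algorithm~\ref{alg:lifting} as a two-coloring of the adjacency graph of 3-cells, propagated from $\nu_o$, with labels forced to disagree across faces of $\mathsf{F}$ and to agree elsewhere. The difference is one of rigor, and in two places you prove what the paper only asserts. The paper's proof is a sketch: it states that a labeling contradiction occurs exactly when $\mathsf{F}$ is not a boundary, and it never justifies minimality at all (the final comparison step of the algorithm is simply taken for granted). You close both gaps. First, the spanning-tree argument makes precise why the \emph{local} checks suffice: a check at a non-tree edge passes exactly when its fundamental cycle meets $\mathsf{F}$ evenly, fundamental cycles generate the cycle space over $\mathbb{Z}_2$, and ``every cycle meets $\mathsf{F}$ evenly'' is precisely the cut condition characterizing sets of the form $\partial\Omega$; this gives the equivalence between successful termination and validity of $\mathsf{F}$ in both directions. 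Second, your observation that each face lies on exactly two 3-cells and that connectivity forces $\ker\partial_3=\{0,\mathsf{C}_3\}$ shows $\Omega$ and $\mathsf{C}_3\setminus\Omega$ are the \emph{only} preimages of $\mathsf{F}$, so picking the smaller of the two really does return the smallest solution --- a point the paper's proof does not address. One caveat, which applies equally to the paper's own sketch: as literally written, the while-loop can terminate as soon as every cell is labeled, possibly before every adjacency has been examined, so the claim that every non-tree edge is checked requires reading the algorithm as a full traversal that inspects all neighbors of every cell; both your proof and the paper's are proofs of that intended algorithm.
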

\begin{proof}
The algorithm takes as input a collection of faces supposed to enclose a volume. If the faces enclose a volume, we can label all the 3-cells inside and outside the boundary  differently. Cells adjacent to each other and enclosed within the same boundary are labeled same. 
The algorithms proceeds by labeling a random choice of initial qubit and then proceeds to assign labels to all its adjacent qubits. If two qubits share a face that is not in the boundary $\mathsf{F}$, then they must have the same label because one must cross the boundary to change the label. Two qubits, that are adjacent and share a face that is in the boundary must have different labels. 
The algorithm stops when there are no more qubits to be labeled or when a qubit  is assigned contradicting labels, indicating that $\mathsf{F}$ is not a boundary. 
\end{proof}
The running time of the algorithm is linear in the number of qubits. 
The algorithm assumes that all qubits have the same error probability. It 
can be modified  so that it picks the most likely qubits if the error probabilities are not uniform.
We now give the decoding procedure for color codes.  
\begin{theorem}[Decoding 3D color codes via 3D toric codes]\label{th:decoding-tcc}
An error $E$ on a color code can be estimated using Algorithm~\ref{alg:css-decoding}. The estimate will be within a stabilizer on the color code provided the intermediate decoders also estimate within a stabilizer on the respective codes. 
\end{theorem}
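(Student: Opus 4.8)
The plan is to exploit the CSS structure of the color code so that the bit-flip and phase-flip parts of $E$ are decoded by the two independent pipelines already built, and then to obtain the end-to-end guarantee by composing the boundary-estimation theorems with the lifting lemma. I would write $E=E_XE_Z$ and note that the measured syndrome splits into an $X$-syndrome on the edges of $\Gamma^\ast$ and a $Z$-syndrome on its vertices, which are processed separately. It therefore suffices to show that each of the two estimates is correct up to a stabilizer of the color code: if the returned operators are equivalent up to a stabilizer to $E_X$ and to $E_Z$ respectively, their product is equivalent to $E$ up to a stabilizer, since a product of stabilizers is a stabilizer.

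For the bit-flip part I would first apply Theorem~\ref{th:x-errors}: passing the edge syndrome to Algorithm~\ref{alg:x-type} returns a face set $\mathsf{F}$ equal to $\partial(E_XS)$ for some $X$-stabilizer $S$, provided each component $3$D toric decoder on $\Gamma^{\ast\setminus c}$ estimates $\pi_c(E_X)$ up to a stabilizer. I would then feed $\mathsf{F}$ to Algorithm~\ref{alg:lifting}; by Lemma~\ref{lm:lifting} this returns the smallest volume $\Omega$ with $\partial\Omega=\mathsf{F}=\partial\,\supp(E_XS)$. The phase-flip part runs in parallel: Theorem~\ref{th:z-edge-bndry} (via Algorithm~\ref{alg:z-type}) yields the edge boundary $\delta E_Z$ up to the boundary of a $Z$-stabilizer, and after the validity check this edge set is, by Lemma~\ref{lm:z2xerror}, exactly the $X$-syndrome of an operator with the same support as $E_Z$, so Theorem~\ref{th:z-errors} reuses Algorithm~\ref{alg:x-type} to turn it into the face boundary of $E_Z$ up to a $Z$-stabilizer, which Algorithm~\ref{alg:lifting} again lifts to a minimal volume.

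The main obstacle is the interior/exterior ambiguity inherent in the lifting step. Since the complex has no $4$-cells, any two volumes with the same boundary differ by a $3$-cycle, and for the closed complexes considered here the only nonzero such $3$-cycle is the full chain $\mathsf{C}_3(\Gamma^\ast)$; hence a face set that bounds a volume $\Omega$ bounds exactly one other volume, its complement $\mathsf{C}_3(\Gamma^\ast)\setminus\Omega$, and the two differ by the all-qubit operator $\prod_\nu X_\nu$ in the bit-flip case and $\prod_\nu Z_\nu$ in the phase-flip case. The crux is to show that each of these operators is a stabilizer, which I would establish from the colorability of the $3$-colex: fixing any color $c$, every tetrahedron of $\Gamma^\ast$ has exactly one $c$-vertex, so
\begin{eqnarray}
\prod_{v\in\mathsf{C}_0^c(\Gamma^\ast)}B_v^X=\prod_{v\in\mathsf{C}_0^c(\Gamma^\ast)}\;\prod_{\nu:v\in\nu}X_\nu=\prod_\nu X_\nu,
\end{eqnarray}
and the identical argument with the dependent $Z$-stabilizers $B_v^Z=\prod_{\nu:v\in\nu}Z_\nu$ gives $\prod_\nu Z_\nu$. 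Thus the smaller-volume choice made in Algorithm~\ref{alg:lifting} only multiplies the estimate by a stabilizer.

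Putting the pieces together, under the stated hypothesis that every intermediate toric decoder on $\Gamma^{\ast\setminus c}$ and $\Gamma^{\ast\setminus cc'}$ is correct up to a stabilizer on its own complex, the lifted bit-flip estimate equals $E_X$ and the lifted phase-flip estimate equals $E_Z$, each up to a color-code stabilizer. Their product, returned by Algorithm~\ref{alg:css-decoding}, is therefore within a stabilizer of the color code, as claimed. I expect the genuinely delicate point to be precisely the stabilizer identity for the all-qubit operators above; the remainder is bookkeeping that follows directly from Theorems~\ref{th:x-errors},~\ref{th:z-edge-bndry},~\ref{th:z-errors} and Lemma~\ref{lm:lifting}.
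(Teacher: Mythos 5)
Your proposal is correct and follows the same overall route as the paper's proof: split $E=E_XE_Z$ using the CSS property, invoke Theorems~\ref{th:x-errors}, \ref{th:z-edge-bndry} and~\ref{th:z-errors} for the face- and edge-boundary estimates, and Lemma~\ref{lm:lifting} for the lifting step, with correctness up to a color-code stabilizer inherited from the component toric decoders. Where you go beyond the paper is in handling the interior/exterior ambiguity of the lift. The paper's proof simply cites Lemma~\ref{lm:lifting}, whose algorithm returns the \emph{smaller} of the two volumes bounded by the estimated face set, and never says why returning the complement of the true error support is harmless. Your observation closes exactly this point: since $\Gamma^\ast$ has no 4-cells, two volumes with the same boundary differ by the full chain $\mathsf{C}_3(\Gamma^\ast)$, and the corresponding all-qubit operators are stabilizers because each tetrahedron contains exactly one $c$-vertex, so $\prod_{v\in\mathsf{C}_0^c(\Gamma^\ast)}B_v^X=\prod_{\nu} X_\nu$, with the identical identity for the dependent vertex $Z$-stabilizers $B_v^Z=\prod_{\nu: v\in\nu}Z_\nu$ giving $\prod_\nu Z_\nu$. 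This argument is sound and makes your write-up more complete than the paper's own proof on the correctness claim. The only ingredient of the paper's proof you elide is the treatment of decoder failure --- the paper notes that a logical error in a component decoder manifests either as a failed lift or as an invalid edge boundary, detectable because $\pi_c(\delta E_Z)$ must consist of homologically trivial cycles --- but that concerns when failure is declared, not the up-to-stabilizer guarantee you were asked to prove.
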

\begin{proof}
The proof of this theorem is straightforward given our previous results. The decoding is performed separately for $X$ and $Z$ errors
and it makes use of the fact that the color code is a CSS code. 
The algorithm proceeds by  estimating the boundary of the 
$X$-type errors and $Z$-errors separately.  The correctness of these procedures is due to Theorems~\ref{th:x-errors}~and~\ref{th:z-errors}. Lemma~\ref{lm:lifting} ensures that these boundaries can be lifted to find the qubits that are in error. Decoding failure results if any of the component decoders fail or make logical errors. This will lead to either the failure of lifting procedure or an invalid edge boundary in line 7. The validity of the edge boundary can be checked by 
ensuring that the restricted syndrome $\pi_c(\delta E_Z)$ consists of homologically trivial cycles. 
\end{proof}
\begin{algorithm}[H]
\caption{{\ensuremath{\mbox{ Decoding 3D color codes}}}}\label{alg:css-decoding}
\begin{algorithmic}[1]
\REQUIRE {A 3-colex $\Gamma$ and the syndrome}
\ENSURE {Error estimate $\hat{E} $ }
\STATE Let $s_X$ be syndrome for $X$ type error $E_X$
\STATE Obtain the face boundary $\partial E_X$ from  Algorithm~\ref{alg:x-type} with $s_X$ as input 
\STATE Lift the boundary $\partial E_X$ by running Algorithm~\ref{alg:lifting} and obtain  $\Omega_X$, the support of $E_X$
\IF  {$\Omega_X=\emptyset $ and $s_X\neq 0$ } 
\STATE Declare decoder failure and exit                                                                                                                                                                                                                                                                                                                         
 \ENDIF
\STATE Let $s_Z$ be syndrome for $Z$ type error $E_Z$
\STATE Estimate  the edge boundary $ \delta E_Z$ from  Algorithm~\ref{alg:z-type} with $s_Z$ as input 
\STATE Check $\pi_c(\delta E_Z)$ consists of homologically trivial cycles only, otherwise declare decoding failure and exit. 
\STATE Obtain the face boundary $\partial E_Z$ from  Algorithm~\ref{alg:x-type} with $\delta E_Z$ as input
\STATE Lift the boundary $\partial E_Z$ by running Algorithm~\ref{alg:lifting} and obtain  $\Omega_Z$, the support of $E_Z$
\IF {$\Omega_Z=\emptyset $ and $s_Z\neq 0$}
\STATE Declare decoder failure and exit                                                                                                                                                                                                                                                                                                                         
\ENDIF
\STATE  Return $\hat{E} = \prod_{\nu\in \Omega_X} X_{\nu} \prod_{\nu\in \Omega_Z}Z_\nu$
\end{algorithmic}
\end{algorithm}
\begin{remark}
The decoder could fail if any of the intermediate decoders make a logical error. 
\end{remark}
The overall running time depends on the running time of the 3D toric code decoders. We can  run them independently or we can take advantage of the fact that the errors on component 3D toric codes are correlated. 

\section{Conclusion}

In this paper we have shown how to project 3D color codes onto 3D toric codes. The projection was motivated by the problem of decoding 
3D color codes. The toric codes  thus obtained are linearly related to the size of the parent color code. So if 3D toric codes on arbitrary lattices can be decoded efficiently, then so can the 3D color codes by projecting them onto 3D toric codes using our map. Our work provides an alternative perspective to that of \cite{kubica15} who also proposed a map between color codes and toric codes. Our approach emphasizes the topological properties of color codes. 
At this point there is no data available for performance of the decoders based on our map as well as the map due to \cite{kubica15}.
One difficulty to compare the performance of the decoders arising out of these maps, as we mentioned earlier,  is that we do not have efficient decoders for 3D toric code on an arbitrary lattice. 
(Decoders are known only for the cubic lattice.)
So an open question for further research is to study the decoding of 3D toric codes. 
This map could also find application in the decoding of gauge color codes \cite{brown15,bombin15,bombin15a} by projecting onto 3D toric codes. 
 Another avenue for further research is to study the possible use of this map for fault tolerant quantum computing protocols. 
\subparagraph*{Acknowledgement}
This research was supported by a grant from Center for Industrial Consultancy and Sponsored Research.


\begin{thebibliography}{21}%
\makeatletter
\providecommand \@ifxundefined [1]{%
 \@ifx{#1\undefined}
}%
\providecommand \@ifnum [1]{%
 \ifnum #1\expandafter \@firstoftwo
 \else \expandafter \@secondoftwo
 \fi
}%
\providecommand \@ifx [1]{%
 \ifx #1\expandafter \@firstoftwo
 \else \expandafter \@secondoftwo
 \fi
}%
\providecommand \natexlab [1]{#1}%
\providecommand \enquote  [1]{``#1''}%
\providecommand \bibnamefont  [1]{#1}%
\providecommand \bibfnamefont [1]{#1}%
\providecommand \citenamefont [1]{#1}%
\providecommand \href@noop [0]{\@secondoftwo}%
\providecommand \href [0]{\begingroup \@sanitize@url \@href}%
\providecommand \@href[1]{\@@startlink{#1}\@@href}%
\providecommand \@@href[1]{\endgroup#1\@@endlink}%
\providecommand \@sanitize@url [0]{\catcode `\\12\catcode `\$12\catcode
  `\&12\catcode `\#12\catcode `\^12\catcode `\_12\catcode `\%12\relax}%
\providecommand \@@startlink[1]{}%
\providecommand \@@endlink[0]{}%
\providecommand \url  [0]{\begingroup\@sanitize@url \@url }%
\providecommand \@url [1]{\endgroup\@href {#1}{\urlprefix }}%
\providecommand \urlprefix  [0]{URL }%
\providecommand \Eprint [0]{\href }%
\providecommand \doibase [0]{http://dx.doi.org/}%
\providecommand \selectlanguage [0]{\@gobble}%
\providecommand \bibinfo  [0]{\@secondoftwo}%
\providecommand \bibfield  [0]{\@secondoftwo}%
\providecommand \translation [1]{[#1]}%
\providecommand \BibitemOpen [0]{}%
\providecommand \bibitemStop [0]{}%
\providecommand \bibitemNoStop [0]{.\EOS\space}%
\providecommand \EOS [0]{\spacefactor3000\relax}%
\providecommand \BibitemShut  [1]{\csname bibitem#1\endcsname}%
\let\auto@bib@innerbib\@empty
\bibitem [{\citenamefont {Castelnovo}\ and\ \citenamefont
  {Chamon}(2008)}]{castelnovo08}%
  \BibitemOpen
  \bibfield  {author} {\bibinfo {author} {\bibfnamefont {C.}~\bibnamefont
  {Castelnovo}}\ and\ \bibinfo {author} {\bibfnamefont {C.}~\bibnamefont
  {Chamon}},\ }\href {\doibase 10.1103/PhysRevB.78.155120} {\bibfield
  {journal} {\bibinfo  {journal} {Phys. Rev. B}\ }\textbf {\bibinfo {volume}
  {78}},\ \bibinfo {pages} {155120} (\bibinfo {year} {2008})}\BibitemShut
  {NoStop}%
\bibitem [{\citenamefont {Hamma}\ \emph {et~al.}(2005)\citenamefont {Hamma},
  \citenamefont {Zanardi},\ and\ \citenamefont {Wen}}]{hamma05}%
  \BibitemOpen
  \bibfield  {author} {\bibinfo {author} {\bibfnamefont {A.}~\bibnamefont
  {Hamma}}, \bibinfo {author} {\bibfnamefont {P.}~\bibnamefont {Zanardi}}, \
  and\ \bibinfo {author} {\bibfnamefont {X.-G.}\ \bibnamefont {Wen}},\ }\href
  {\doibase 10.1103/PhysRevB.72.035307} {\bibfield  {journal} {\bibinfo
  {journal} {Phys. Rev. B}\ }\textbf {\bibinfo {volume} {72}},\ \bibinfo
  {pages} {035307} (\bibinfo {year} {2005})}\BibitemShut {NoStop}%
\bibitem [{\citenamefont {Bombin}\ and\ \citenamefont
  {Martin-Delgado}(2007{\natexlab{a}})}]{bombin07}%
  \BibitemOpen
  \bibfield  {author} {\bibinfo {author} {\bibfnamefont {H.}~\bibnamefont
  {Bombin}}\ and\ \bibinfo {author} {\bibfnamefont {M.~A.}\ \bibnamefont
  {Martin-Delgado}},\ }\href {\doibase 10.1103/PhysRevLett.98.160502}
  {\bibfield  {journal} {\bibinfo  {journal} {Phys. Rev. Lett.}\ }\textbf
  {\bibinfo {volume} {98}},\ \bibinfo {pages} {160502} (\bibinfo {year}
  {2007}{\natexlab{a}})}\BibitemShut {NoStop}%
\bibitem [{\citenamefont {Calderbank}\ and\ \citenamefont
  {Shor}(1996)}]{calderbank96}%
  \BibitemOpen
  \bibfield  {author} {\bibinfo {author} {\bibfnamefont {A.~R.}\ \bibnamefont
  {Calderbank}}\ and\ \bibinfo {author} {\bibfnamefont {P.~W.}\ \bibnamefont
  {Shor}},\ }\href {\doibase 10.1103/PhysRevA.54.1098} {\bibfield  {journal}
  {\bibinfo  {journal} {Phys. Rev. A}\ }\textbf {\bibinfo {volume} {54}},\
  \bibinfo {pages} {1098} (\bibinfo {year} {1996})}\BibitemShut {NoStop}%
\bibitem [{\citenamefont {Ioffe}\ and\ \citenamefont
  {M\'ezard}(2007)}]{ioffe08}%
  \BibitemOpen
  \bibfield  {author} {\bibinfo {author} {\bibfnamefont {L.}~\bibnamefont
  {Ioffe}}\ and\ \bibinfo {author} {\bibfnamefont {M.}~\bibnamefont
  {M\'ezard}},\ }\href {\doibase 10.1103/PhysRevA.75.032345} {\bibfield
  {journal} {\bibinfo  {journal} {Phys. Rev. A}\ }\textbf {\bibinfo {volume}
  {75}},\ \bibinfo {pages} {032345} (\bibinfo {year} {2007})}\BibitemShut
  {NoStop}%
\bibitem [{\citenamefont {Brooks}\ and\ \citenamefont
  {Preskill}(2013)}]{brooks13}%
  \BibitemOpen
  \bibfield  {author} {\bibinfo {author} {\bibfnamefont {P.}~\bibnamefont
  {Brooks}}\ and\ \bibinfo {author} {\bibfnamefont {J.}~\bibnamefont
  {Preskill}},\ }\href {\doibase 10.1103/PhysRevA.87.032310} {\bibfield
  {journal} {\bibinfo  {journal} {Phys. Rev. A}\ }\textbf {\bibinfo {volume}
  {87}},\ \bibinfo {pages} {032310} (\bibinfo {year} {2013})}\BibitemShut
  {NoStop}%
\bibitem [{\citenamefont {Sarvepalli}\ \emph {et~al.}(2009)\citenamefont
  {Sarvepalli}, \citenamefont {Klappenecker},\ and\ \citenamefont
  {R{\"o}tteler}}]{sarvepalli09}%
  \BibitemOpen
  \bibfield  {author} {\bibinfo {author} {\bibfnamefont {P.~K.}\ \bibnamefont
  {Sarvepalli}}, \bibinfo {author} {\bibfnamefont {A.}~\bibnamefont
  {Klappenecker}}, \ and\ \bibinfo {author} {\bibfnamefont {M.}~\bibnamefont
  {R{\"o}tteler}},\ }\href {\doibase 10.1098/rspa.2008.0439} {\bibfield
  {journal} {\bibinfo  {journal} {Proceedings of the Royal Society of London A:
  Mathematical, Physical and Engineering Sciences}\ }\textbf {\bibinfo {volume}
  {465}},\ \bibinfo {pages} {1645} (\bibinfo {year} {2009})}\BibitemShut
  {NoStop}%
\bibitem [{\citenamefont {Bravyi}\ \emph {et~al.}(2011)\citenamefont {Bravyi},
  \citenamefont {Leemhuis},\ and\ \citenamefont {Terhal}}]{bravyi11}%
  \BibitemOpen
  \bibfield  {author} {\bibinfo {author} {\bibfnamefont {S.}~\bibnamefont
  {Bravyi}}, \bibinfo {author} {\bibfnamefont {B.}~\bibnamefont {Leemhuis}}, \
  and\ \bibinfo {author} {\bibfnamefont {B.~M.}\ \bibnamefont {Terhal}},\
  }\href {\doibase http://dx.doi.org/10.1016/j.aop.2010.11.002} {\bibfield
  {journal} {\bibinfo  {journal} {Annals of Physics}\ }\textbf {\bibinfo
  {volume} {326}},\ \bibinfo {pages} {839 } (\bibinfo {year}
  {2011})}\BibitemShut {NoStop}%
\bibitem [{\citenamefont {Dennis}\ \emph {et~al.}(2002)\citenamefont {Dennis},
  \citenamefont {Kitaev}, \citenamefont {Landahl},\ and\ \citenamefont
  {Preskill}}]{preskill02}%
  \BibitemOpen
  \bibfield  {author} {\bibinfo {author} {\bibfnamefont {E.}~\bibnamefont
  {Dennis}}, \bibinfo {author} {\bibfnamefont {A.}~\bibnamefont {Kitaev}},
  \bibinfo {author} {\bibfnamefont {A.}~\bibnamefont {Landahl}}, \ and\
  \bibinfo {author} {\bibfnamefont {J.}~\bibnamefont {Preskill}},\ }\href
  {\doibase http://dx.doi.org/10.1063/1.1499754} {\bibfield  {journal}
  {\bibinfo  {journal} {Journal of Mathematical Physics}\ }\textbf {\bibinfo
  {volume} {43}},\ \bibinfo {pages} {4452} (\bibinfo {year}
  {2002})}\BibitemShut {NoStop}%
\bibitem [{\citenamefont {Kubica}\ \emph {et~al.}(2015)\citenamefont {Kubica},
  \citenamefont {Yoshida},\ and\ \citenamefont {Pastawski}}]{kubica15}%
  \BibitemOpen
  \bibfield  {author} {\bibinfo {author} {\bibfnamefont {A.}~\bibnamefont
  {Kubica}}, \bibinfo {author} {\bibfnamefont {B.}~\bibnamefont {Yoshida}}, \
  and\ \bibinfo {author} {\bibfnamefont {F.}~\bibnamefont {Pastawski}},\ }\href
  {http://stacks.iop.org/1367-2630/17/i=8/a=083026} {\bibfield  {journal}
  {\bibinfo  {journal} {New Journal of Physics}\ }\textbf {\bibinfo {volume}
  {17}},\ \bibinfo {pages} {083026} (\bibinfo {year} {2015})}\BibitemShut
  {NoStop}%
\bibitem [{\citenamefont {Delfosse}(2014)}]{delfosse14}%
  \BibitemOpen
  \bibfield  {author} {\bibinfo {author} {\bibfnamefont {N.}~\bibnamefont
  {Delfosse}},\ }\href {\doibase 10.1103/PhysRevA.89.012317} {\bibfield
  {journal} {\bibinfo  {journal} {Physical Review A}\ }\textbf {\bibinfo
  {volume} {89}},\ \bibinfo {pages} {012317} (\bibinfo {year}
  {2014})}\BibitemShut {NoStop}%
\bibitem [{\citenamefont {Yoshida}(2011)}]{yoshida11}%
  \BibitemOpen
  \bibfield  {author} {\bibinfo {author} {\bibfnamefont {B.}~\bibnamefont
  {Yoshida}},\ }\href {\doibase http://dx.doi.org/10.1016/j.aop.2010.10.009}
  {\bibfield  {journal} {\bibinfo  {journal} {Annals of Physics}\ }\textbf
  {\bibinfo {volume} {326}},\ \bibinfo {pages} {15 } (\bibinfo {year}
  {2011})},\ \bibinfo {note} {january 2011 Special Issue}\BibitemShut {NoStop}%
\bibitem [{\citenamefont {Bhagoji}\ and\ \citenamefont
  {Sarvepalli}(2015)}]{bhagoji15}%
  \BibitemOpen
  \bibfield  {author} {\bibinfo {author} {\bibfnamefont {A.}~\bibnamefont
  {Bhagoji}}\ and\ \bibinfo {author} {\bibfnamefont {P.}~\bibnamefont
  {Sarvepalli}},\ }\href {\doibase 10.1109/ISIT.2015.7282627} {\bibfield
  {journal} {\bibinfo  {journal} {2015 IEEE International Symposium on
  Information Theory (ISIT)}\ ,\ \bibinfo {pages} {1109}} (\bibinfo {year}
  {2015})},\ \Eprint {http://arxiv.org/abs/1503.03009} {arXiv:1503.03009}
  \BibitemShut {NoStop}%
\bibitem [{\citenamefont {Bombin}\ \emph {et~al.}(2012)\citenamefont {Bombin},
  \citenamefont {Duclos-Cianci},\ and\ \citenamefont {Poulin}}]{bombin11}%
  \BibitemOpen
  \bibfield  {author} {\bibinfo {author} {\bibfnamefont {H.}~\bibnamefont
  {Bombin}}, \bibinfo {author} {\bibfnamefont {G.}~\bibnamefont
  {Duclos-Cianci}}, \ and\ \bibinfo {author} {\bibfnamefont {D.}~\bibnamefont
  {Poulin}},\ }\href {http://stacks.iop.org/1367-2630/14/i=7/a=073048}
  {\bibfield  {journal} {\bibinfo  {journal} {New Journal of Physics}\ }\textbf
  {\bibinfo {volume} {14}},\ \bibinfo {pages} {073048} (\bibinfo {year}
  {2012})}\BibitemShut {NoStop}%
\bibitem [{\citenamefont {Calderbank}\ \emph {et~al.}(1998)\citenamefont
  {Calderbank}, \citenamefont {Rains}, \citenamefont {Shor},\ and\
  \citenamefont {Sloane}}]{calderbank98}%
  \BibitemOpen
  \bibfield  {author} {\bibinfo {author} {\bibfnamefont {A.~R.}\ \bibnamefont
  {Calderbank}}, \bibinfo {author} {\bibfnamefont {E.~M.}\ \bibnamefont
  {Rains}}, \bibinfo {author} {\bibfnamefont {P.~M.}\ \bibnamefont {Shor}}, \
  and\ \bibinfo {author} {\bibfnamefont {N.~J.~A.}\ \bibnamefont {Sloane}},\
  }\href {\doibase 10.1109/18.681315} {\bibfield  {journal} {\bibinfo
  {journal} {IEEE Transactions on Information Theory}\ }\textbf {\bibinfo
  {volume} {44}},\ \bibinfo {pages} {1369} (\bibinfo {year}
  {1998})}\BibitemShut {NoStop}%
\bibitem [{\citenamefont {Gottesman}(1997)}]{gottesman97}%
  \BibitemOpen
  \bibfield  {author} {\bibinfo {author} {\bibfnamefont {D.}~\bibnamefont
  {Gottesman}},\ }\emph {\bibinfo {title} {{Stablizer Codes and Quantum Error
  correction}}},\ \href@noop {} {\bibinfo {type} {Phd thesis}},\ \bibinfo
  {school} {California Institute of Technology} (\bibinfo {year} {1997}),\
  \Eprint {http://arxiv.org/abs/9705052v1} {arXiv:9705052v1 [arXiv:quant-ph]}
  \BibitemShut {NoStop}%
\bibitem [{\citenamefont {Bombin}\ and\ \citenamefont
  {Martin-Delgado}(2007{\natexlab{b}})}]{bombin07a}%
  \BibitemOpen
  \bibfield  {author} {\bibinfo {author} {\bibfnamefont {H.}~\bibnamefont
  {Bombin}}\ and\ \bibinfo {author} {\bibfnamefont {M.~a.}\ \bibnamefont
  {Martin-Delgado}},\ }\href {\doibase 10.1103/PhysRevB.75.075103} {\bibfield
  {journal} {\bibinfo  {journal} {Physical Review B - Condensed Matter and
  Materials Physics}\ }\textbf {\bibinfo {volume} {75}} (\bibinfo {year}
  {2007}{\natexlab{b}}),\ 10.1103/PhysRevB.75.075103},\ \Eprint
  {http://arxiv.org/abs/0607736} {arXiv:0607736 [cond-mat]} \BibitemShut
  {NoStop}%
\bibitem [{Note1()}]{Note1}%
  \BibitemOpen
  \bibinfo {note} {Usually the support is defined as a set, here it is
  convenient to define as a linear combination.}\BibitemShut {Stop}%
\bibitem [{\citenamefont {Brown}\ \emph {et~al.}(2015)\citenamefont {Brown},
  \citenamefont {Nickerson},\ and\ \citenamefont {Browne}}]{brown15}%
  \BibitemOpen
  \bibfield  {author} {\bibinfo {author} {\bibfnamefont {B.~J.}\ \bibnamefont
  {Brown}}, \bibinfo {author} {\bibfnamefont {N.~H.}\ \bibnamefont
  {Nickerson}}, \ and\ \bibinfo {author} {\bibfnamefont {D.~E.}\ \bibnamefont
  {Browne}},\ }\href {http://arxiv.org/abs/1503.08217} {\ ,\ \bibinfo {pages}
  {4} (\bibinfo {year} {2015})},\ \Eprint {http://arxiv.org/abs/1503.08217}
  {arXiv:1503.08217} \BibitemShut {NoStop}%
\bibitem [{\citenamefont {Bombin}(2015{\natexlab{a}})}]{bombin15}%
  \BibitemOpen
  \bibfield  {author} {\bibinfo {author} {\bibfnamefont {H.}~\bibnamefont
  {Bombin}},\ }\href {\doibase 10.1103/PhysRevX.5.031043} {\bibfield  {journal}
  {\bibinfo  {journal} {Physical Review X}\ }\textbf {\bibinfo {volume} {5}},\
  \bibinfo {pages} {1} (\bibinfo {year} {2015}{\natexlab{a}})}\BibitemShut
  {NoStop}%
\bibitem [{\citenamefont {Bombin}(2015{\natexlab{b}})}]{bombin15a}%
  \BibitemOpen
  \bibfield  {author} {\bibinfo {author} {\bibfnamefont {H.}~\bibnamefont
  {Bombin}},\ }\href {http://stacks.iop.org/1367-2630/17/i=8/a=083002}
  {\bibfield  {journal} {\bibinfo  {journal} {New Journal of Physics}\ }\textbf
  {\bibinfo {volume} {17}},\ \bibinfo {pages} {083002} (\bibinfo {year}
  {2015}{\natexlab{b}})}\BibitemShut {NoStop}%
\end{thebibliography}
 
%

\end{document}